\tikzstyle{bcomonoid}=[minimum width=.3cm,draw=black,inner sep=0.5pt,fill=black,minimum height=.3cm,font=\footnotesize,shape=diamond]
\tikzstyle{anycomonoid}=[pattern color=black,minimum height=.3cm,minimum width=.3cm,draw=black,inner sep=0.5pt,preaction={fill=white},pattern=north west lines,font=\footnotesize,shape=diamond]
\tikzstyle{hmonoid}=[minimum width=.2cm,draw=black,inner sep=0.5pt,fill=black,minimum height=.2cm,font=\footnotesize,shape=circle]
\tikzstyle{square}=[fill=white,shape=square]
\tikzstyle{anymonoid}=[pattern color=black,minimum height=.3cm,minimum width=.3cm,draw=black,inner sep=0.5pt,preaction={fill=white},pattern=north west lines,font=\footnotesize,shape=diamond]
\tikzstyle{none}=[]
\tikzstyle{monoid}=[minimum width=.3cm,draw=black,inner sep=0.5pt,fill=white,minimum height=.3cm,font=\footnotesize,shape=diamond]
\tikzstyle{gmonoid}=[minimum width=.2cm,draw=black,inner sep=0.5pt,fill=white,minimum height=.2cm,font=\footnotesize,shape=circle]
\tikzstyle{default_label}=[label position=right]
\tikzstyle{comonoid}=[minimum width=.3cm,draw=black,inner sep=0.5pt,fill=white,minimum height=.3cm,font=\footnotesize,shape=diamond]
\tikzstyle{bmonoid}=[minimum width=.3cm,draw=black,inner sep=0.5pt,fill=black,minimum height=.3cm,font=\footnotesize,shape=diamond]
\makeatletter\pgfkeys{/tikz/stretch/.code 2 args=\tikz@stretch{#1}{#2}}\def\tikz@stretch#1#2{\pgfpointtransformed{\pgfpoint{0cm}{0cm}}\pgf@xa=\pgf@x\pgf@ya=\pgf@y\pgfpointtransformed{\pgfpoint{#1cm}{#2cm}}\advance\pgf@x by-\pgf@xa\ifdim\pgf@x<0pt\pgf@x=-\pgf@x\fi\advance\pgf@y by-\pgf@ya\ifdim\pgf@y<0pt\pgf@y=-\pgf@y\fi\pgfkeysalso{/tikz/minimum width/.expanded=\the\pgf@x}\pgfkeysalso{/tikz/minimum height/.expanded=\the\pgf@y}}\makeatother
\newcommand{\ket}[1]{|#1\rangle}
\newcommand{\quot}[2]{{\raisebox{.2em}{$#1$}\left/\raisebox{-.2em}{$#2$}\right.}}
\newcommand{\TODO}[1]{%
\vskip .1cm%
\begingroup\par%
\hbox to \hsize{\strut\llap{\bf\large $\bigstar$ \`A faire%
$\bigstar$\quad}\vrule\hfil\parbox{.9\hsize}{\it #1}\hfil}
\par\endgroup%
\vskip .1cm%
}%
\newcommand\mytag[2][]{%
  \def\@currentlabel{#2}%
  #2\label{#1} 
}
\title{A recipe for quantum graphical languages} 
\titlerunning{A recipe for quantum graphical languages} 
\author{Titouan Carette}{Universit\'e de Lorraine, CNRS, Inria, LORIA, F 54000 Nancy, France}{titouan.carette@loria.fr}{https://orcid.org/0000-0002-1825-0097}{}
\author{Emmanuel Jeandel}{Universit\'e de Lorraine, CNRS, Inria, LORIA, F 54000 Nancy, France}{emmanuel.jeandel@loria.fr}{https://orcid.org/0000-0002-1825-0097}{}
\authorrunning{T. Carette and E. Jeandel} 
\keywords{Categorical Quantum Mechanics, Quantum Computing, Category Theory} 
\newtheorem{disc}[theorem]{Discussion}
\begin{document}
	\nolinenumbers
\maketitle

\begin{abstract}
  Different graphical calculi have been proposed to represent quantum computation. First the ZX-calculus \cite{coecke2011interacting}, followed by the ZW-calculus \cite{hadzihasanovic2015diagrammatic} and then the ZH-calculus \cite{backens2018zh}. We can wonder if new $Z^*$-calculi will continue to be proposed forever. This article answers negatively. All those language share a common core structure we call $Z^*$-algebras. We classify $Z^*$-algebras up to isomorphism in two dimensional Hilbert spaces and show that they are all variations of the aforementioned calculi. We do the same for linear relations and show that the calculus of \cite{bonchi2017interacting} is essentially the unique one.
\end{abstract}
	
	\vspace{0.5cm}
	
	
The most common formalization of quantum computing is the circuit
model, a diagrammatical language representing unitary matrices in a
two dimensional Hilbert space, see \cite{nielsen2002quantum} for an
introduction. Verification of quantum processes requires a sound and
complete equational theory for quantum circuits, i.e. a complete presentation of unitaries by generators and relations. This is known to be a difficult open problem.
	
By relaxing the unitarity condition and allowing all linear
maps, at least three different complete equational theories were found. The \emph{$ZX$-calculus} was introduced in
\cite{coecke2011interacting} and was designed as a part of the
categorical quantum mechanics program. It relies on the interaction
between two complementary observables.
The $ZX$-calculus has proven to be a good language to reason about
quantum processes \cite{de2020zx,duncan2019graph}. However, finding a set of rules
to make it complete has been open for a long time, and part of the
solution~\cite{jeandel2018complete} involved a secondary graphical language: the $ZW$-calculus
\cite{hadzihasanovic2015diagrammatic,coecke2010compositional}. This
calculus is built on two tripartite entanglement classes (GHZ and W-states) unraveling new
structures. Yet another complete graphical language was later
introduced, the $ZH$-calculus \cite{backens2018zh}, inspired by
hyper-graph states.

Compared to quantum circuits, these three languages share an important advantage. Processes and matrices are not represented merely by
diagrams, but by \emph{graphs} (hence the
term \emph{graph}ical language). Isomorphic
graphs represent the same quantum evolution. This peculiarity is embedded in the
\textit{only topology matters} paradigm. This is a subtle
feature: a usual diagrammatic language (like quantum circuits) starts
with a given set of primitives (usually quantum gates) for which the
notion of inputs and outputs is significant.
When \textit{only topology matters}, one can readily switch an input into an
output, and conversely.


\vspace{0.2cm}

This property follows from some specificities of the building blocks of those
languages. One goal of this article is to give a formal definition of these specificities.

Then, we will be able to prove that the three existing
graphical calculi for quantum computing, $ZX$, $ZH$ and $ZW$, are \emph{essentially} the only possible graphical calculi
for quantum computing.

To do this, we identify in this paper a common structure underlying
the already defined  calculi, that we call a
$Z^*$-algebra. Formally, the structure consists in two Frobenius
algebras interacting \textit{via} a bialgebra rule. To this, we add one
additional property, called \emph{compatibility}, to ensure the
\textit{only topology matters} paradigm. We then describe all the
$Z^*$-algebras in two dimensional Hilbert spaces and show that they all
happen to be phase-shifted versions of four structures we call $ZX$,
$ZH$, $ZW$ and $ZZ$. The first three appear respectively in the $ZX$,
$ZH$ and $ZW$-calculus. The last one corresponds to a degenerate calculus arising from only one self-interacting special Frobenius algebra.
	
It is important to note that languages for quantum computing are not
the only known to enjoy these nice properties. In particular
Bonchi and his coauthors  \cite{bonchi2017interacting} gave in 2017 a
graphical language for linear relations, with striking similarities to 
the $ZX$, $ZH$ and $ZW$ calculi. In fact, we will prove that their
language is \emph{essentially} the only graphical language for linear relations.

\vspace{.2cm}
There exist some other formalisms trying to unify graphical
languages, in particular in the context of interacting Frobenius
algebras \cite{duncan2016interacting} or Hopf-Frobenius algebras
\cite{collins2019hopf}. However, these formalisms usually require too
much structures, and fail to capture all three examples simultaneously. Typically they do not capture the $ZW$-calculus.

Some of our work indirectly has to do with the classification of
finite dimensional algebras, bialgebras and Frobenius algebras.
In the general case, an exact
classification of algebras is not known, even in the commutative
case. It is known that there is an infinite number of algebras up to
isomorphism of dimension $d$ for any $d > 6$. All of them are
known for $d\leq 6$ \cite{poonen2008isomorphism}. We find a classification of low dimension bialgebras in \cite{dekkar2008bialgebra}. We can find some constructions related to $Z^*$-algebras in \cite{koppinen1996algebras} and \cite{doi2000bifrobenius}.

	
\vspace{0.2cm}
	This paper starts by introducing the \emph{prop formalism} for graphical languages. The second section introduces various algebraic structures culminating in the definition of $Z^*$-algebras. The third section provides a classification of $Z^*$-algebras up to isomorphism for qubits. The last section gives some hint towards classification in higher dimension and provides a classification of $Z^*$-algebras in the context of linear relations.

    \section{Diagrammatical quantum computing}
	
In this paper all processes are represented by a combinatorial
structure called a prop \cite{zanasi2018interacting}. 
	
	\begin{definition}[prop]
		A \textbf{prop} $\textbf{P}$ is a collection of sets
                $\textbf{P}[n,m]$, indexed by $\mathbb{N}^2$. An
                element $f \in \textbf{P}[n,m]$ is called a morphism
                and is usually written
                $f:n\to m$. These sets are linked by the following
                operators:\vspace{0.25cm}
		\begin{itemize}
			\item A \textbf{composition} $\circ: \textbf{P}[b,c]\times \textbf{P}[a,b]\to \textbf{P}[a,c]$ satisfying: $(f\circ g)\circ h=f\circ (g\circ h)$.\vspace{0.25cm}
			\item A \textbf{tensor product} $\otimes: \textbf{P}[a,b]\times \textbf{P}[c,d]\to \textbf{P}[a+c,b+d]$, satisfying: $(f\otimes g)\otimes h=f\otimes (g\otimes h)$ and $(f\circ g)\otimes (h\circ k)=(f\otimes h)\circ (g\otimes k)$.\vspace{0.25cm}
			\item An \textbf{empty morphism} $1:0\to 0$
                          such that $f\otimes 1= 1\otimes f=f$ for all
                          $f:a\to b$.\vspace{0.25cm}
			\item An \textbf{identity} $id: 1\to 1$ such that
                          $f\circ id^{\otimes a}=id^{\otimes b} \circ
                          f=f$ for all $f:a\to b$. With the convention $id^{\otimes 0}=1$.
\vspace{0.25cm}
			\item A \textbf{symmetry} $\sigma: 2\to 2$
                          satisfying: $\sigma^2=id^{\otimes 2}$ and such that, $\sigma_{a}\circ (f\otimes id)=(id\otimes f)\circ \sigma_b$, for all $f:a\to b$, where $\sigma_{n+1}=(1^{\otimes n}\otimes \sigma)\circ (1\otimes \sigma_n)$, with $\sigma_0 = id$.
                        \end{itemize}
		
		\end{definition}
	
	In the language of categories \cite{MacLane}, a prop is a small strict symmetric monoidal category whose monoid of object is spanned by a unique object.
They can be seen as resource sensitive Lawvere theories where multiple
outputs are allowed \cite{bonchi2018deconstructing}.

	Props admit a nice diagrammatical representation that gives a topological interpretation to the axioms \cite{selinger2010survey}. A morphism $f:n\to m$ is represented as a box with $n$ inputs and $m$ outputs. Composition is represented by plugging the boxes. The tensor product by drawing the boxes side by side. The identity is represented by a single wire, the empty morphism by an empty diagram and the symmetry by wire crossing.
	  
	\begin{center}
		\begin{tabular}{ccccc}
\input{figs/boxf} & \begin{tikzpicture}[baseline=(current bounding box.center)]
\begin{pgfonlayer}{nodelayer}
\coordinate (0) at (0.25,0);
\coordinate (1) at (0.25,0.5);
\coordinate (2) at (0.25,1);

\end{pgfonlayer}
\begin{pgfonlayer}{edgelayer}
\draw[] (0.center) to[out= 90, in=-89.9] (1.center) to[out= 90, in=-89.9] (2.center);

\end{pgfonlayer}
\end{tikzpicture}
 & \input{figs/boxfog} & \input{figs/boxftg} & \input{figs/swap}\\[0.5cm]
			$f:n\to m$ & $id:1\to 1$ & $f\circ g$ & $f\otimes g$ & $\sigma : 2\to 2$
		\end{tabular}
	\end{center}
	
	That choice of notation fits nicely with the axioms of props. The corresponding equations are natural in the diagrammatic notation. 
	In particular the symmetry axioms express that the boxes can move through wires:
	
	\begin{center}
		$\input{figs/sinv}\qquad\qquad\qquad \input{figs/snat}$
	\end{center}	
	This diagrammatical language is sound \cite{joyal1991geometry}. So we can equivalently work with equations or diagrams.

        \begin{example}[sets and functions]
          Let $X$ be a set. In the prop $\textbf{Fun}_X$ , the set
          $\textbf{Fun}_X[n,m]$ is exactly the set of functions from
          $X^n$ to $X^m$, with composition being the usual composition, and
          tensor product being the cartesian product. 
        \end{example}
	\begin{example}[matrices]
          For an integer $d$ and a field  $\mathbb{K}$,  the prop
        $\textbf{Mat}_{\mathbb{K}^d}$ is defined by $\textbf{Mat}_{\mathbb{K}^d}[n,m]\coloneqq
        \mathcal{M}_{d^m \times d^n} (\mathbb{K})$, the matrices of size $d^m$ by
        $d^n$ over the field $\mathbb{K}$.  The composition is the matrix product and the tensor is the Kronecker product.
Keeping with quantum computing traditions, we will denote by
$(\ket{e_i})_{1\leq i\leq d}$ a basis of $\mathbb{K}^d$.
	\end{example}

	The main prop of interest for quantum computing is $\textbf{Qubits}\coloneqq \textbf{Mat}_{\mathbb{C}^2}$. The quantum analog of bits, the qubits, are described by vectors in $\mathbb{C}^2$. A register of $n$ qubits is then a vector in the tensor product $\mathbb{C}^{2^n}$. 	
	
\section{Graphical structures}

While the diagrammatical languages presented in the previous section
make reasoning about props easier, it is still somewhat strict: inputs
come to the top of the box representing $f$, outputs goes out at the bottom.
\emph{Graphical languages} do not have this restriction, and we will
explain here what additional properties should be satisfied to obtain
a better framework.

\subsection{Half a spider}
We start by studying elementary associative binary operations with units: monoids.
	
\begin{definition}

A monoid is a morphism $\mu: 2 \rightarrow 1$ (the
product) and a morphism $\eta: 0 \rightarrow 1$ (the unit) that
satisfy the equations: ${\mu\circ (\eta\otimes \mathrm{id})=\mu\circ (\mathrm{id}\otimes \eta)=\mathrm{id}
 }$ and
${\mu\circ (\mu\otimes \mathrm{id})=\mu\circ (\mathrm{id}\otimes \mu)
 }$. If we depict the product
{\input figs/mprod } and the unit\vspace{-3mm}
{\begin{tikzpicture}[baseline=(current bounding box.center)]
\begin{pgfonlayer}{nodelayer}
\coordinate (0) at (0.25,0);
\node [style=monoid] (1) at (0.25,0.5) {};

\end{pgfonlayer}
\begin{pgfonlayer}{edgelayer}
\draw[] (0.center) to[out= 90, in=-89.9] (1.center);

\end{pgfonlayer}
\end{tikzpicture}
 }, the equations becomes:
${\input figs/muniteq }$ and
${\input figs/massoc }$\kern -1.5mm. The monoid
is commutative if ${\mu\circ \sigma=\mu
 }$. In pictures,
{\input figs/mcomeq }.
\end{definition}

\textbf{\emph{All the monoids in this paper are supposed to be commutative.}}

	Once we have a monoid $(\mu,\eta)$, we can define an $n$-ary
        product inductively by ${\mu_0 = \eta}, {\mu_1 = id}$ and
        ${\mu_{n+1} =  \mu \circ (\mu_n \otimes id)}$. As an example, here is $\mu_4$: {\input figs/m4 }.
	
	Using the equations, we have more generally $\mu_{n+p} = \mu \circ (\mu_n \otimes \mu_p)$, so how to
	transform the operator $\mu_n$ into compositions of $\mu$ and $\eta$ doesn't matter.

        \begin{example}
          A monoid in $\textbf{Fun}_X$ is exactly what is usually
          called a monoid on $X$. Monoid in $\textbf{Mat}_{\mathbb{K}^d}$
          are exactly the $d$-dimensional $\mathbb{K}$-algebras.
        \end{example}          

        In the following we will be mainly interested in the two
        following examples:
	\begin{example}[co-copy]
		Given a basis $\left(\ket{i}\right)_{1\leq i\leq d}$ of $\mathbb{K}^d$, the \textbf{co-copy} monoid is define in $\textbf{Mat}_{\mathbb{K}^d}$ by $\eta:1\mapsto \sum_{i=1}^{d}\ket{i}$ and $\mu:\ket{i}\ket{j}\mapsto \begin{cases}
		\ket{i}\text{ if } i=j\\
		\text{else } 0
		\end{cases}$
	\end{example}
	
	\begin{example}[monoid algebra \cite{Ponizovskii}]\label{ex}
		Given a monoid $M = \left(X,*,e\right)$ in
                $\textbf{Fun}_X$ with $X$ of cardinality $d$, we can define a monoid $\mathbb{K}[M]$ in $\textbf{Mat}_{\mathbb{K}^d}$ by indexing each element of a basis by the elements of $M$.
		We then take: $\eta: 1\mapsto \ket{e}$ and
                $\mu:\ket{a}\ket{b}\mapsto \ket{a*b}$.
               If $M$ is a group, we will speak of a \emph{group algebra}.
                
               Starting from a monoid $M$ in $\textbf{Fun}_X$ with $X$ of cardinality
$d+1$ that contains a
zero element (that we note $\bot$), we can build a \emph{contracted algebra}
$\mathbb{K}M$ in $\textbf{Mat}_{\mathbb{K}^d}$ by essentially the same
construction, but identifying the element $\bot$ with the matrix $0$.
One can see that the previous example of the co-copy actually  fits in
this framework: it is exactly $\mathbb{K}M$ for the monoid in
$\textbf{Fun}_{\{\bot, 1, \dots,n\}}$ defined by $i * j = i$ if $i=j$ and $\bot$ otherwise.

\end{example}
	
Any commutative monoid defines a group of phases:	
	\begin{definition}[phase]
		Given a commutative monoid $\left(\mu,\eta\right)$, a
		\textbf{phase} is an invertible morphism $\alpha:1\to 1$ such that: $\alpha\circ \mu=\mu\circ (\mathrm{id}\otimes \alpha)
 $. Pictorially: {\input figs/phase }.
\end{definition}
	
	The phases form an abelian group. In general we will write this group multiplicatively and write
	$\alpha \beta$ instead of $\alpha \circ \beta$.
	\emph{In the following, the notations $\alpha$ and $\beta$ will be reserved for elements of the phase group}.

\begin{disc}
  \label{disc:scalar}
An invertible scalar (a $0 \to 0$ morphism) is obviously a phase.
  Therefore the group of invertible scalars $S$ is always a subgroup of the phase group $G$.
  If it is a direct summand, i.e. if $G = S \times H$ for some group $H$, then one can simplify the
  presentation by ``dropping out'' the scalars and only consider nontrivial phases.
  This will be the case later on for the Qubit prop, but there are examples for which such a
  simplification cannot be made.
\end{disc}
	Once we have phases, we can introduce a new sequence of operators
$\mu_n(\alpha)$ and $\eta(\alpha)$ defined by $\mu_n(\alpha) = \alpha \circ
        \mu_n$ and $\eta(\alpha) = \alpha \circ \eta$ . Pictorially
{\input figs/m4ph } and
{\input figs/uph }.
        
	These operators satisfy the following equation: {\input figs/m4psh }

        It is interesting to note that $\mu(\alpha)$ itself defines a
        monoid, by taking as unit $\eta(\alpha^{-1})$. We will call it
        a \textbf{phase-shifted} monoid of the original monoid.
      
	Monoids dualize to co-monoids:	
	\begin{definition}
		A co-monoid in a prop is a morphism $\Delta: 1 \rightarrow 2$ (the co-product)
		and a morphism $\epsilon: 1 \rightarrow 0$ (the co-unit) that satisfies the
		equations: $(\Delta\otimes \mathrm{id})\circ \Delta=(\mathrm{id}\otimes \Delta)\circ \Delta
 $ and $(\epsilon\otimes \mathrm{id})\circ \Delta=(\mathrm{id}\otimes \epsilon)\circ \Delta=\mathrm{id}
 $. If we depict the co-product {\input figs/comprod }
\vspace{-4mm}
		and the co-unit {\begin{tikzpicture}[baseline=(current bounding box.center)]
\begin{pgfonlayer}{nodelayer}
\node [style=monoid] (0) at (0.25,0.25) {};
\coordinate (1) at (0.25,0.75);

\end{pgfonlayer}
\begin{pgfonlayer}{edgelayer}
\draw[] (0.center) to[out= 90, in=-89.9] (1.center);

\end{pgfonlayer}
\end{tikzpicture}
 }, the equations become:		
${\input figs/comassoc }$ and${\input figs/comuniteq }$.
		A co-monoid is co-commutative if it satisfies:
                $\sigma\circ \Delta=\Delta
 $. In pictures, {\input figs/comcomeq }.
              \end{definition}
\textbf{\emph{All the co-monoids in this paper are supposed to be cocommutative.}}
              
	Again we can inductively define $\Delta_n$ by $\Delta_0 = \epsilon, \Delta_1 = id$ and  $\Delta_{n+1} = (\Delta_n \otimes id)  \circ  \Delta$.
	
	We define phases for co-commutative co-monoids in the same way as phases for commutative monoids, they are the invertible morphisms satisfying: {\input figs/cophase }.
	
	We can also define the morphisms $\Delta(\alpha)$ and $\epsilon(\alpha)$ as well as the phase-shifted co-monoid $\left(\Delta(\alpha),\epsilon(1/\alpha)\right)$.

	\begin{example}[copy in $\textbf{Fin}_X$]
          The functions $\Delta: x \mapsto (x,x)$, with $\epsilon$ the
          only function from $X$ to $X^0$, defines a co-monoid in
          $\textbf{Fin}_X$. This is the only co-monoid in this prop.
	\end{example}
	
	\begin{example}[copy in $\textbf{Mat}_{\mathbb{K}^d}$]
		Given a basis of $\left(\ket{i}\right)_{1\leq i\leq d}$, the \textbf{copy} co-monoid is defined in $\textbf{Mat}_{\mathbb{K}^d}$ by $\epsilon: \ket{i}\mapsto 1$ and $\Delta:\ket{i}\mapsto \ket{i}\ket{i}$.
	\end{example}
	
	\begin{example}[group co-algebra]
		Given a finite group $G$ of size $d$ we can define a co-monoid in $\textbf{Mat}_{\mathbb{K}^d}$ by $\ket{x}\mapsto \frac{1}{d}\sum_{a* b=x}\ket{a}\ket{b}$, the co-unit is $\ket{x}\mapsto \begin{cases}
		1\text{ if } x=e\\
		\text{else } 0
		\end{cases}$.
	\end{example}

	\clearpage
	\subsection{One spider}
	
	A monoid and a co-monoid can interact forming a Frobenius algebra.
	
	\begin{definition}
		A monoid $(\mu,\eta)$ and a co-monoid $(\Delta, \epsilon)$
		form a Frobenius algebra iff they satisfy: ${(\mathrm{id}\otimes \mu)\circ (\Delta\otimes \mathrm{id})=(\mu\otimes \mathrm{id})\circ (\mathrm{id}\otimes \Delta)=\Delta\circ \mu
 }$. Pictorially: {\input figs/frobenius }.
	\end{definition}

A Frobenius algebra is commutative if the monoid is
        commutative and the co-monoid is cocommutative. \emph{\textbf{All the Frobenius algebras in this paper are commutative.}}

In a Frobenius algebra the phases of the monoid coincide with the phases of the co-monoid. Thus we can speak without ambiguity of the phases of a Frobenius algebra.
\begin{example}[In $\mathbf{Fin}_{X}$]
  There are no Frobenius algebras in $\mathbf{Fin}_{X}$ (unless $|X|=1$).
	\end{example}
	
	\begin{example}[copy and cocopy]
		Given a basis $\left(\ket{i}\right)_{1\leq i\leq d}$ of $\mathbb{K}^d$, the co-copy monoid and copy co-monoid form a Frobenius algebra in $\textbf{Mat}_{\mathbb{K}^d}$.
	\end{example}
	
	\begin{example}[group Frobenius algebra]
		Given a group $G$ of size $d$, the group algebra and the group co-algebra form a Frobenius algebra in $\textbf{Mat}_{\mathbb{K}^d}$.
	\end{example}
	
	When we have a Frobenius algebra $(\mu,\eta,\Delta, \epsilon)$ we can
	define a family of morphisms $S_{n,m}: n \rightarrow m$  by $S_{n,m} \coloneqq \mu_n \circ
	\Delta_m$. We call them \emph{spiders} and depict them {\input figs/s }.\vspace{-7mm}
	They satisfy the following equation: {\input figs/spider }. As we have done for monoids and co-monoids, provided a phase
        $\alpha$, we define decorated spiders $S_{n,m}(\alpha)$ by
        $S_n = m_n \circ \alpha \circ \Delta_p$. These new morphisms
        satisfy the equation:
{\input figs/spiderp }.

	\subsubsection{Compact structure}
	
	The symmetry in a prop allows various topological moves involving the wires. We can go further by providing a way to bend them. This is done by compact structures.
	
	\begin{definition}[compact structure]
		
A compact structure is given by two morphisms
$\delta:0\to 2$ and $\nu: 2\to 0$ depicted as
{\begin{tikzpicture}[baseline=(current bounding box.center)]
\begin{pgfonlayer}{nodelayer}
\node [] (0) at (0.5,0) {};
\coordinate (1) at (0.25,0.5);
\coordinate (2) at (0.75,0.5);

\end{pgfonlayer}
\begin{pgfonlayer}{edgelayer}
\draw[] (0.center) to[out=180, in=-89.9] (1.center);
\draw[] (0.center) to[out=  0, in=-89.9] (2.center);

\end{pgfonlayer}
\end{tikzpicture}
 } and
{\begin{tikzpicture}[baseline=(current bounding box.center)]
\begin{pgfonlayer}{nodelayer}
\coordinate (0) at (0.25,0);
\coordinate (1) at (0.75,0);
\node [] (2) at (0.5,0.5) {};

\end{pgfonlayer}
\begin{pgfonlayer}{edgelayer}
\draw[] (0.center) to[out= 90, in=180] (2.center);
\draw[] (1.center) to[out= 90, in=  0] (2.center);

\end{pgfonlayer}
\end{tikzpicture}
 } satisfying
the snake equation $ (\nu\otimes \mathrm{id})\circ (\mathrm{id}\otimes \delta)=(\mathrm{id}\otimes \nu)\circ (\delta\otimes \mathrm{id})=\mathrm{id}
 $. Pictorially:
{\input figs/snake }.                
		A compact structure is \emph{symmetric} if $\nu\circ \sigma=\nu\circ (\mathrm{id}\otimes \mathrm{id})
 $: pictorially, {{\input figs/scup }.} (This implies a similar statement on
                $\delta$). \emph{\textbf{All compact structures in
                    this paper are symmetric.}}
	\end{definition}
	
	A compact structure allows to bend the wire leading to new topological properties. This extends the diagrammatical language \cite{selinger2010survey}. Any Frobenius algebra directly provides a compact structure given by $ \delta=\Delta\circ \eta
 $ and $ \nu=\epsilon\circ \mu
 $, pictorially: {\input figs/cap } and
	{ \input figs/cup }.

	If the Frobenius algebra is commutative then this compact structure is symmetric.
	
	This compact structure behaves well with the Frobenius algebra, we have: $(\mathrm{id}\otimes \mu)\circ (\delta\otimes \mathrm{id})=\Delta
 $, pictorially: { \input figs/flex }
	
	This equation is interesting from a topological point of view. Bending the wires of a diagram gives a diagram representing
	the same morphism. This has been referred to as the \textit{only topology matters} paradigm \cite{coecke2011interacting}. \textbf{For us, the only topology matter paradigm is the key property of a graphical language}.
	
	In particular, we can by abuse of notation write: {\input figs/right }
	which may represent any of the following  diagrams : {\input figs/right2 }
	
	In general, we can give an unambiguous meaning to any multi-graph with input and outputs. We emphasize that this property plays a central role in the elegance of the $Z^*$ calculi. 
        
	\subsection{Two spiders}
	
	The ZX, ZW and ZH-calculii all have two Frobenius algebras. In fact a language based on only one spider is not expressive enough. The next step is therefore to have two of them.

	In this setting, the \textit{only topology matters} paradigm 
	doesn't apply anymore. Indeed, coloring the two algebras in white and black, we have: {\input figs/flex } and
        {\input figs/flex1 } using the two compact
        structures corresponding to the two algebras, 
	but in general {\input figs/notflex }.

So we cannot hope for the two compact structures to be equal,
but we can hope for some sort of \emph{compatibility}:
        
	\begin{definition}[compatibility]
		Two Frobenius algebras are \textbf{compatible} if their compact structure
		satisfy: \raisebox{-0.5\height}{\input figs/compat}. We call the left hand side the \textbf{dualizer}.
	\end{definition}  
	
	Note that the snake equation(s) implies that the left hand side is always the inverse of the right hand side. In compatible case the dualizer is an involution.
	
When the two Frobenius algebras are compatible, we can adjust the
language so that we can bend wires on both structures. This is done at
the price of a slight modification of the second algebra. We
introduce now a new generator (represented by a black node) that
represents the dualizer, and introduce four new generators in place of
the original structures:

\begin{center}
  \input{figs/hmdf}\hfill\input{figs/hudf}\hfill\input{figs/hddf}\hfill\input{figs/hedf}\hfill
  and
  more generally  \input{figs/hs}
\end{center}

With the new generators, we succeed in obtaining a new language:
Indeed: we can now bend the wires of the new generator, and we keep a
form of the spider rule:

\begin{center}
  \input{figs/flex2}
  \hspace{5cm}
  \input{figs/hspiderp}
\end{center}

\begin{disc}One could decide similarly to change the first Frobenius algebra rather than the
second one. In fact, if there is a preexisting compact structure, it also make sense to search 
for a compatibility between the preexisting compact structure and the two algebras.
This is somehow what has been done in \cite{hadzihasanovic2015diagrammatic}.
\end{disc}

\subsection{Two spiders interacting}
	
	We now require the two spiders to interact in a precise way. 
	
	\begin{definition}[Bialgebra]
		A co-monoid and a monoid form a bialgebra iff they satisfy the three following equations:
		\begin{center}
			\begin{tabular}{ccccccc}
				\input{figs/bigebra} && \input{figs/copy} && \input{figs/cocopy} && \input{figs/id}\\
				Bigebra \mytag[bigebra]{(B)} &&Copy
                                \mytag[copy]{(C1)}&& Cocopy
                                \mytag[cocopy]{(C2)} && Identity \mytag[id]{(Id)}
			\end{tabular}
		\end{center}
	\end{definition}

The four bialgebra laws enforces some kind of commutation property
between the co-monoid and the monoid.
There are conflicting definitions in the literature on which properties one should impose on a
bialgebra. The one we take is from Sweedler\cite{sweedler}.

We now come to our main definition:
	\begin{definition}[$Z^*$-algebra]
		A \textbf{$Z^*$-algebra} is formed by two compatible
                Frobenius algebras such that the co-monoid of the first
                one satisfies the bigebra rule \ref{bigebra} with the monoid of the
                second one.
	\end{definition}

A $Z^*$-algebra formed by two Frobenius algebras $F$ and $G$ will be denoted $FG$.

\begin{disc}One could give a different definition of a $Z^*$-algebra, by imposing all four
conditions of the bialgebra law, or even impose it to both monoid/co-monoid pairs.
However it turns out that the most important examples (esp. the ZW-calculus) do not satisfy all
equations. We isolate the bigebra law as being central.\end{disc}

Using the notations from the previous section, we see that a
$Z^*$-algebra leads to a \emph{graphical}-calculus, formed by two
spiders that are subject to the following rules\footnote{The white
  node is the same as the white lozenge, but represented differently to
emphasize that the whole calculus is different}:

\hspace{-0.45cm}\input{figs/gspiderp} \hspace{0.1cm} \input{figs/hspiderp}\hspace{0.1cm} \input{figs/inv}\hspace{0.1cm} \input{figs/hbigebra} 

Together with the \textit{only topology matters} paradigm, which means we can
bend the wires of any node, changing an input into an output.

The rules we obtain are a common subset of the rules of the
$Z^*$-calculi
\cite{bonchi2017interacting,coecke2011interacting,hadzihasanovic2015diagrammatic,backens2018zh}.

\section[Classification of Z*-algebras in Qubits and LinRel]{Classification of $Z^*$-algebras in Qubits and LinRel}

Now that we have defined what we think is a graphical calculus, we can
proceed to the main theorem: there are essentially only four possible calculi for quantum
  computing up to isomorphism: the ZX-calculus, the ZW-calculus, the ZW-calculus, and
  the (trivial) ZZ-calculus.
Before we can give a formal statement of the theorem, we need to
explain what we mean by ``essentially''.

Consider a $Z^*$-algebra formed of two
Frobenius algebras named $A$ and $B$.
Suppose that $\lambda$ is a invertible scalar (i.e. a $0
\rightarrow 0$ morphism).
If we multiply, say, the generators of the monoid of $A$ by $\lambda$
and the generators of the co-monoid of $B$ by $1/\lambda$, then we
obtain a new $Z^*$-algebra. This new algebra is usually not isomorphic to the
first one, but for all practical purposes, they behave the same.

More generally, suppose we add a phase $\alpha$ to the monoid of $A$
(replacing $\mu, \eta$ by $\mu(\alpha)$ and $\eta(\alpha)$) and we
add similarly a phase $\beta$ to the co-monoid of $B$.
Then we obtain two new Frobenius algebras that we will call
$A^{\alpha}$ and $B_{\beta}$ which satisfies all axioms of a
$Z^*$-algebra, \emph{except possibly the compatibility relations}.
We call this a \emph{phase-shifted} versions of the original $Z^*$-algebra.
We will show that all possible graphical calculi for quantum computing
are phase-shifted version of four basic ones.

Phase-shifted algebras are a bit subtle. To ease the understanding, we provide here the graphical calculus that corresponds to  $A^{\alpha}B_{\beta}$ in terms of the original generators, with the caveat that it is a graphical calculus only if the compatibility relation is satisfied (equivalently, the black node below is an involution). $n$ and $m$ denote respectively the number of inputs and outputs and we represent the compact structure of the white algebra differently in both calculi to avoid confusion:

\begin{center}
\input figs/whiteps
\hfill
\input figs/capps
\hfill
\input figs/singleblackps
\hfill
\input figs/blackps
\end{center}

\subsection[Z*-algebras in Qubits]{$Z^*$-algebras in Qubits}

We now investigate the particular case of graphical calculi for
quantum computing. This corresponds to the special case $\mathbf{Qubits} = \textbf{Mat}_{\mathbb{C}^2}$.

A monoid in Qubit is exactly the same as a $\mathbb{C}$-algebra of
dimension 2.  Algebras in dimension $2$ have been
classified~\cite{study1890systeme}: there are only two algebras up to isomorphism. A proof is in Appendix \ref{studyproof}.

For our purpose however, we will introduce four algebras (the first three being isomorphic), that we call $Z$, $X$, $H$ and $W$. Working in the
basis $\left(\ket{0},\ket{1}\right)$. They correspond to contracted algebras $\mathbb{C}M$, see \ref{ex}.
	
	\begin{center}
		\begin{tabular}{|c|c|c|}
			\hline
			$Z$ & $\ket{0}$ & $\ket{1}$ \\
			\hline

			$\ket{0}$ & $\ket{0}$ & $0$ \\

			\hline
			$\ket{1}$ & $0$ & $\ket{1}$  \\
			\hline
		\end{tabular} $\quad$ \begin{tabular}{|c|c|c|}
		\hline
		$X$ & $\ket{0}$ & $\ket{1}$ \\
		\hline
		
		$\ket{0}$ & $\ket{0}$ & $\ket{1}$  \\
		
		\hline
		$\ket{1}$ & $\ket{1}$ & $\ket{0}$  \\
		\hline
		\end{tabular} $\quad$ \begin{tabular}{|c|c|c|}
\hline
$H$ & $\ket{0}$ & $\ket{1}$ \\
\hline

$\ket{0}$ & $\ket{0}$ & $\ket{0}$  \\

\hline
$\ket{1}$ & $\ket{0}$ & $\ket{1}$  \\
\hline
		\end{tabular} $\quad$ \begin{tabular}{|c|c|c|}
			\hline
			$W$ & $\ket{0}$ & $\ket{1}$ \\
			\hline
			
			$\ket{0}$ & $\ket{0}$ & $\ket{1}$  \\
			
			\hline
			$\ket{1}$ & $\ket{1}$ & $0$  \\
			\hline
		\end{tabular}
	\end{center}
	
	Those multiplication tables describe the behavior of the algebras on $\ket{0}$ and $\ket{1}$.
	
	We see that $Z$ behaves like a Kronecker delta ensuring equality, $X$ is the XOR gate, $H$ the AND gate and $W$ is the effect algebra on two elements.
	
	The matricial representation in the computational basis are:
	
	\begin{center}
          \input figs/mprodany
          ,
          \begin{tikzpicture}[baseline=(current bounding box.center)]
\begin{pgfonlayer}{nodelayer}
\coordinate (0) at (0.25,0);
\node [style=anymonoid] (1) at (0.25,0.5) {};

\end{pgfonlayer}
\begin{pgfonlayer}{edgelayer}
\draw[] (0.center) to[out= 90, in=-89.9] (1.center);

\end{pgfonlayer}
\end{tikzpicture}

          :=
	  \begin{tabular}{|c|c|}
            \hline
            \multicolumn{2}{|c|}{$Z$}\\
			\hline
			$\mu_Z$ & $\eta_Z$ \\
			\hline
			$\scalebox{0.5}{$\begin{pmatrix} 1&0&0&0\\ 0&0&0&1 \end{pmatrix}$}$
			&$\scalebox{0.5}{$\begin{pmatrix} 1\\1 \end{pmatrix}$}$\\
			\hline
		\end{tabular} $\quad$ \begin{tabular}{|c|c|}
            \hline
            \multicolumn{2}{|c|}{$X$}\\
		\hline
		$\mu_X$ & $\eta_X$ \\
		\hline
		$\scalebox{0.5}{$\begin{pmatrix} 1&0&0&1\\ 0&1&1&0 \end{pmatrix}$}$
		&$\scalebox{0.5}{$\begin{pmatrix} 1\\0 \end{pmatrix}$}$\\
		\hline
	  \end{tabular} $\quad$ \begin{tabular}{|c|c|}
            \hline
            \multicolumn{2}{|c|}{$H$}\\            
	\hline
	$\mu_H$ & $\eta_H$ \\
	\hline
	$\scalebox{0.5}{$\begin{pmatrix} 1&1&1&0\\ 0&0&0&1 \end{pmatrix}$}$
	&$\scalebox{0.5}{$\begin{pmatrix} 0\\1 \end{pmatrix}$}$\\
	\hline
          \end{tabular} $\quad$ \begin{tabular}{|c|c|}
\hline
\multicolumn{2}{|c|}{$W$}\\            
\hline
$\mu_W$ & $\eta_W$ \\
\hline
$\scalebox{0.5}{$\begin{pmatrix} 1&0&0&0\\ 0&1&1&0 \end{pmatrix}$}$
&$\scalebox{0.5}{$\begin{pmatrix} 1\\0 \end{pmatrix}$}$\\
\hline
\end{tabular}
	\end{center}

The phase group of $Z$ is ${\mathbb{C}_{\times}^*}^2$. The phase group of $W$ is $\mathbb{C}_{\times}^*\times \mathbb{C}_{+} $.

If we write the phases for our four favorite monoids, they read:
	
\begin{center}
\input figs/anyphase
:=  
		\begin{tabular}{|c|}
			\hline
			$Z$ \\
			\hline
			$a,b\in \mathbb{C}^*$\\
			\hline
			$\scalebox{0.5}{$a\begin{pmatrix} 1&0\\ 0&b\end{pmatrix}$}$\\
			\hline
		\end{tabular} $\quad$ \begin{tabular}{|c|c|}
			\hline
			$X$\\
			\hline
			$a,b\in \mathbb{C}^*$\\
			\hline
			$\scalebox{0.5}{$\frac{a}{2}\begin{pmatrix} 1+b&1-b\\ 1-b&1+b\end{pmatrix}$}$\\
			\hline
		\end{tabular} $\quad$ \begin{tabular}{|c|c|}
		\hline
		$H$\\
		\hline
		$a,b\in \mathbb{C}^*$\\
		\hline
		$\scalebox{0.5}{$a\begin{pmatrix} 1&1-b\\ 0&b\end{pmatrix}$}$\\
		\hline
		\end{tabular} $\quad$ \begin{tabular}{|c|c|}
			\hline
			$W$\\
			\hline
			$a\in \mathbb{C}^*$ $b\in\mathbb{C}$\\
			\hline
			$\scalebox{0.5}{$a\begin{pmatrix} 1&0\\ b&1\end{pmatrix}$}$\\
			\hline
		\end{tabular}
	\end{center}
        
As explained in Discussion~\ref{disc:scalar}, in the case of Qubit, we can write\footnote{This can be
done more generally in any prop if the group of scalars is divisible.} the phase group
$G = \mathbb{C}^\star \times H$ where the first component $\mathbb{C}^\star$ corresponds to the invertible scalars,
and $H$ is some commutative group ($H = \mathbb{C}^\star_{\times}$ in the first
three cases, and $H = \mathbb{C}_+$ in the last case). One could then index the phases only by this subgroup $H$ (i.e. always take $a = 1$), introducing scalars when necessary. This is what has been done in the literature.

All four monoids form Frobenius algebras, with the following co-monoids\footnote{The co-monoids have been choosen such that the three first Frobenius algebras are isomorphic, hence the weird $\frac{1}{2}$ factor in $X$.}, co-units, and compact structures:
	
\begin{center}
          \input figs/anyprod
          ,
          \begin{tikzpicture}[baseline=(current bounding box.center)]
\begin{pgfonlayer}{nodelayer}
\node [style=anymonoid] (0) at (0.25,0.25) {};
\coordinate (1) at (0.25,0.75);

\end{pgfonlayer}
\begin{pgfonlayer}{edgelayer}
\draw[] (0.center) to[out= 90, in=-89.9] (1.center);

\end{pgfonlayer}
\end{tikzpicture}

          :=
	  \begin{tabular}{|c|c|}
            \hline
            \multicolumn{2}{|c|}{$Z$}\\
            
			\hline
			$\Delta_Z$ & $\epsilon_Z$ \\
			\hline
			$\scalebox{0.5}{$\begin{pmatrix} 1&0\\ 0&0\\ 0&0\\0&1 \end{pmatrix}$}$
			&$\scalebox{0.5}{$\begin{pmatrix} 1&1 \end{pmatrix}$}$\\
			\hline
	  \end{tabular} $\quad$ \begin{tabular}{|c|c|}
            \hline
            \multicolumn{2}{|c|}{$X$}\\            
			\hline
			$\Delta_X$ & $\epsilon_X$ \\
			\hline
			$\scalebox{0.5}{$\frac{1}{2}\begin{pmatrix} 1&0\\ 0&1\\ 0&1\\1&0 \end{pmatrix}$}$
			&$\scalebox{0.5}{$\begin{pmatrix} 2&0 \end{pmatrix}$}$\\
			\hline
	  \end{tabular} $\quad$ \begin{tabular}{|c|c|}
            \hline
            \multicolumn{2}{|c|}{$H$}\\            
			\hline
			$\Delta_H$ & $\epsilon_H$ \\
			\hline
			$\scalebox{0.5}{$\begin{pmatrix} 1&2\\0&-1\\0&-1\\ 0&1 \end{pmatrix}$}$
			&$\scalebox{0.5}{$\begin{pmatrix} 1&2\end{pmatrix}$}$\\
			\hline
	  \end{tabular} $\quad$ \begin{tabular}{|c|c|}
            \hline
            \multicolumn{2}{|c|}{$W$}\\            
			\hline
			$\Delta_W$ & $\epsilon_W$ \\
			\hline
			$\scalebox{0.5}{$\begin{pmatrix} 0&0\\ 1&0 \\ 1&0 \\ 0&1 \end{pmatrix}$}$
			&$\scalebox{0.5}{$\begin{pmatrix} 0& 1 \end{pmatrix}$}$\\
			\hline
		\end{tabular}
	\end{center}

\begin{center}
\input figs/anycap
,
\input figs/anycup
:=  
	  \begin{tabular}{|c|c|}
            \hline
            \multicolumn{2}{|c|}{$Z$}\\
            
			\hline
			$\delta_Z$ & $\nu_Z$ \\
			\hline
			$\scalebox{0.5}{$\begin{pmatrix} 1\\ 0\\ 0\\1 \end{pmatrix}$}$
			&$\scalebox{0.5}{$\begin{pmatrix} 1&0&0&1 \end{pmatrix}$}$\\
			\hline
	  \end{tabular} $\quad$ \begin{tabular}{|c|c|}
            \hline
            \multicolumn{2}{|c|}{$X$}\\            
			\hline
			$\delta_X$ & $\nu_X$ \\
			\hline
			$\scalebox{0.5}{$\frac{1}{2}\begin{pmatrix} 1\\ 0\\ 0\\1 \end{pmatrix}$}$
			&$\scalebox{0.5}{$\begin{pmatrix} 2&0&0&2 \end{pmatrix}$}$\\
			\hline
	  \end{tabular} $\quad$ \begin{tabular}{|c|c|}
            \hline
            \multicolumn{2}{|c|}{$H$}\\            
			\hline
			$\delta_H$ & $\nu_H$ \\
			\hline
			$\scalebox{0.5}{$\begin{pmatrix} 2\\-1\\-1\\ 1 \end{pmatrix}$}$
			&$\scalebox{0.5}{$\begin{pmatrix} 1&1&1&2\end{pmatrix}$}$\\
			\hline
	  \end{tabular} $\quad$ \begin{tabular}{|c|c|}
            \hline
            \multicolumn{2}{|c|}{$W$}\\            
			\hline
			$\delta_W$ & $\nu_W$ \\
			\hline
			$\scalebox{0.5}{$\begin{pmatrix} 0\\ 1 \\ 1 \\ 0 \end{pmatrix}$}$
			&$\scalebox{0.5}{$\begin{pmatrix} 0& 1&1&0 \end{pmatrix}$}$\\
			\hline
		\end{tabular}
	\end{center}
        We can now state our main theorem:
\begin{theorem}\label{zalg}
		The only $Z^*$-algebras up to isomorphism in $\textbf{Qubits}$ are, with $a,b\in \mathbb{C}^*$: $ Z^{(a,\frac{b}{a})} Z_{(\frac{1}{a},\frac{a}{b})}$, $ Z^{(a,\frac{b}{a})} Z_{(-\frac{1}{a},\frac{a}{b})}$, $ Z^{(a,\frac{b}{a})} Z_{(\frac{1}{a},-\frac{a}{b})}$, $ Z^{(a,\frac{b}{a})} Z_{(-\frac{1}{a},-\frac{a}{b})}$, $ Z^{(a,1)} X_{(\frac{2}{a},1)}$, $ Z^{(a,1)} X_{(-\frac{2}{a},1)}$, $ Z^{(a,-1)} X_{(\frac{2}{a},1)}$, $ Z^{(a,-1)} X_{(-\frac{2}{a},1)}$, $Z^{(\frac{a}{b},b^2)} X_{(\frac{2}{a},-1)}$, $Z^{(a,-1)} X_{(\frac{2b}{a},\frac{1}{b^2})}$, $ Z^{(a,\frac{1}{b^2-1})} H_{(\frac{b}{a},\frac{1-b^2}{b^2})}$ with $b^2\neq 1$, $ Z^{(a,\frac{1}{b^2})} W_{(\frac{b}{a},0)}$ and $ W^{(a,0)} Z_{(\frac{b}{a},\frac{1}{b^2})}$.
\end{theorem}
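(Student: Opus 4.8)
The plan is to work entirely inside $\textbf{Qubits}$, representing every generator as an explicit complex matrix and every axiom as a polynomial identity, and then to classify solutions modulo prop-isomorphism (conjugation by an invertible $2\times 2$ matrix $P$). The first step is to normalize the first Frobenius algebra $F$. By the classification of two-dimensional $\mathbb{C}$-algebras there are only two monoids up to isomorphism, the semisimple one (realized by $Z$, and isomorphic to $X$ and $H$) and the nilpotent one (realized by $W$); moreover a commutative Frobenius structure on a fixed algebra is determined up to a phase-shift, since it amounts to a choice of nondegenerate trace $\epsilon$. Hence, after conjugation, I may assume $F$ is a phase-shifted standard $Z$ or a phase-shifted standard $W$, i.e. $F = Z^{\alpha}$ or $F = W^{\alpha}$ with $\alpha$ ranging over the tabulated phase group. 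The residual conjugation freedom left is exactly the automorphism group of $F$, which for $Z$ is realized by diagonal rescalings and the coordinate swap and which acts nontrivially on phases; this is what forces the coupled $(a,b)$ parametrization in the final list, rather than letting us kill the phase of $F$ outright.

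Next I would parametrize the second algebra. Again by the classification its monoid is, up to the remaining conjugation freedom, a phase-shift of $Z$, $X$, $H$ or $W$, and its Frobenius structure is fixed up to a further phase. With $F$ and $G$ so parametrized, the defining axioms of a $Z^*$-algebra reduce to two systems of equations: first, the bigebra law \ref{bigebra} between the co-monoid $(\Delta_F,\epsilon_F)$ and the monoid $(\mu_G,\eta_G)$, which reads $\Delta_F\circ\mu_G = (\mu_G\otimes\mu_G)\circ(\mathrm{id}\otimes\sigma\otimes\mathrm{id})\circ(\Delta_F\otimes\Delta_F)$; and second, the compatibility condition, namely that the dualizer $d$, the $2\times 2$ matrix assembled from the cap $\nu_F$ and the cup $\delta_G$ by bending a wire, is an involution, $d^2 = \mathrm{id}$. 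Commutativity and the Frobenius relations for each of $F$ and $G$ are already built into the tabulated forms.

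Then I would solve, type-pair by type-pair. For each of the finitely many choices (first algebra in $\{Z,W\}$, second in $\{Z,X,H,W\}$) I substitute the tabulated matrices with free phase parameters and solve the resulting polynomial system. The bigebra law is the decisive constraint: I expect it to be unsatisfiable for most pairs, in particular to rule out $W$ paired with $W$, $X$ or $H$, which is precisely why only the families $ZZ$, $ZX$, $ZH$, $ZW$ and $WZ$ appear. The compatibility equation then pins down the exact relation between the two algebras' phases, producing the exponents $b^2$, $1/b^2$ and $1/(b^2-1)$ visible in the statement, while the involution equation $d^2=\mathrm{id}$, having several square roots of the identity, accounts for the discrete $\pm$ branches in the $ZZ$ and $ZX$ families. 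Finally I would quotient the solution set by the residual automorphisms of $F$ and by the scalar redundancy of Discussion~\ref{disc:scalar}, collapsing the solutions into the thirteen listed normal forms.

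The main obstacle is neither writing down nor individually solving these polynomial systems, which is routine if tedious, but the bookkeeping of the final quotient: one must check that the thirteen families are genuinely pairwise non-isomorphic and jointly exhaustive, i.e. that no two of them are secretly identified by a hidden change of basis and that every solution of the bigebra-plus-involution system is captured by exactly one family. Calibrating the phase parametrization so that $a$ and $b$ range over the correct domains, and so that the coupling between the first and second phases is an honest isomorphism invariant, is where the real care will be required.
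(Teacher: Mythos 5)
Your overall architecture (normalize the first Frobenius algebra via the Study classification and the uniqueness of Frobenius structures up to phase, let the bigebra law select admissible partners, solve compatibility to pin the phases, then quotient) is the same as the paper's. But your second step contains a genuine gap that breaks exhaustiveness. You assert that ``by the classification'' the monoid of the second algebra may be taken, up to the residual conjugation freedom, to be a phase-shift of $Z$, $X$, $H$ or $W$. The classification of two-dimensional algebras gives no such thing: it only says the second monoid is $P\circ\mu_Z\circ(P^{-1}\otimes P^{-1})$ or $P\circ\mu_W\circ(P^{-1}\otimes P^{-1})$ for an \emph{arbitrary} invertible $P$, and once the first algebra has been put in normal form the only conjugations still available are the automorphisms of its monoid, which for $\mu_Z$ is just $\{\mathrm{id},\mathrm{swap}\}$ (incidentally, diagonal rescalings are \emph{not} automorphisms of $\mu_Z$; the paper computes this group explicitly). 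The conjugacy orbit of $\mu_Z$ is four-dimensional, whereas your allowed normal forms (phase-shifts of the four standard monoids, modulo a finite group) form only a two-dimensional family, so almost every candidate second monoid is simply absent from your parametrization. Concretely, conjugating $\mu_Z$ by $\left(\begin{smallmatrix}1&1\\0&1\end{smallmatrix}\right)$ yields the commutative monoid $\left(\begin{smallmatrix}1&-1&-1&2\\0&0&0&1\end{smallmatrix}\right)$, which is isomorphic to $\mathbb{C}^2$ as an algebra yet is not a phase-shift of $\mu_Z$, $\mu_X$, $\mu_H$ or $\mu_W$, even after composing with the swap. Since you impose the bigebra law \ref{bigebra} and compatibility only \emph{inside} your restricted family, your argument can confirm that the thirteen listed families are $Z^*$-algebras, but it cannot show that no others exist.

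What is needed --- and what the paper's central lemma supplies --- is to solve the bigebra law against the \emph{standard} co-monoid with the second monoid left completely free: write it as an arbitrary commutative bilinear map $\left(\begin{smallmatrix}a&b&b&c\\d&e&e&f\end{smallmatrix}\right)$, six free parameters with associativity and unitality \emph{not} assumed, and impose \ref{bigebra} with $\Delta_Z$ (resp.\ $\Delta_W$). This is legitimate because phase-shifting the monoid of the first algebra leaves its co-monoid untouched, and phase-shifting the co-monoid of the second leaves its monoid untouched, so the bigebra constraint in any $Z^*$-algebra genuinely involves the unshifted $\Delta_Z$ or $\Delta_W$. The resulting system forces every entry into $\{0,1\}$ in the $\Delta_Z$ case (and an even more rigid form in the $\Delta_W$ case); enumerating the rank-two solutions, discarding the non-associative ones, and quotienting by $\mathrm{Aut}(\Delta_Z)=\{\mathrm{id},\mathrm{swap}\}$ leaves exactly $\mu_Z$, $\mu_X$, $\mu_H$, $\mu_W$ (resp.\ only $\mu_Z$), with \emph{no} phase at all. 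Only then, invoking Lemma~\ref{famfrob} to recover the possible Frobenius structures on each side, do the candidates collapse to $Z^\alpha Z_\beta$, $Z^\alpha X_\beta$, $Z^\alpha H_\beta$, $Z^\alpha W_\beta$ and $W^\alpha Z_\beta$, after which your compatibility computation (equivalently, requiring the phase-shifted dualizer to be an involution) proceeds exactly as you describe and produces the discrete branches and exponent relations of the statement. Your proposal skips precisely this lemma, which is the mathematical heart of the exhaustiveness claim.
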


The proof is detailed in Appendix \ref{zalgproof}. The idea is to show that, up to isomorphism, there
are only five possible monoids/co-monoid pairs
satisfying the bigebra rule, and then show how they can possibly
extend to $Z^*$-algebras.

We will now compare the calculi we obtain with the literature.

\begin{itemize}
  \item The $ZZ$-calculus never has been really considered, as having two
    spiders that are identical is not useful. However, its
    existence is not happenstance: in general a Frobenius algebra
    would not make a bigebra with itself. In this case, it works as
    $Z$ is a \emph{special} Frobenius algebra.
  \item The $ZX$-calculus \cite{coecke2011interacting} corresponds to
    what we call $ZX_{(2,2)}$. This is a particular calculus as the dualizer is trivial: both algebras have the same compact structure (up to scalars). We say that the two algebras are \emph{coflexible}. There are a few substantial
    differences between our calculus and the $ZX$-calculus. Instead of using all possibles phases in $\mathbb{C}^\star$, the authors use phases in the unit circle. Subsequent work \cite{ng2018completeness}
    introduced so-called lambda-boxes to restore all phases.
    Second, the $ZX_{(2,2)}$-calculus is a bit awkward as the two Frobenius algebras $Z$ and $X_{(2,2)}$ are not isomorphic, but only isomorphic up to a scalar. By rescaling the $X$ algebra, we can obtain a calculus where both algebras are dual, at the price of a slightly different bigebra rule. 
The isomorphism corresponds to the Hadamard matrix; as this matrix is symmetric, we can add it to our language without changing the \textit{only topology matters} paradigm, and we obtain this way the $ZX$-calculus defined in \cite{coecke2011interacting}.
  \item The fact that other calculi of the form $Z^\alpha X_\beta$ exist
corresponds to some commutation properties between phases of the two algebras. In fact, they correspond to what is called \emph{the
      $\pi$-commutation rule}: $ (1, \lambda)_Z \circ (1,-1)_X=\lambda (1,-1)_X \circ (1, \frac{1}{\lambda})_Z$ where $(a, b)_Z$ is a phase of $Z$ and $(a, b)_X$ is a phase of $X$.
  \item The original rules \cite{coecke2011interacting} of the $ZX$-calculus correspond exactly to:
    \begin{itemize}
    \item The \textit{only topology matters} paradigm (rule T)
    \item The rules  above valid on any graphical calculi (rules S1, S2, B2), including in this case
      the copy rule (B1) and some form of the identity rule (rule D1)
    \item one rule relative to the $\pi$-commutation (rule K2)
    \item one rule relative to Hadamard, the isomorphism between Z and X (rule C)
    \item one rule stating that $Z$ is a special Frobenius algebra (hidden in rule S1)
    \item one rule called $\pi$-copying (K1), and one rule related to the scalars (rules D2). The second rule is anecdotal. The first one  relates to what are the automorphisms of
      $Z$.
    \end{itemize}
    Therefore, with one omission, all original rules of the $ZX$-calculus could be rediscovered
    again in a systematic way using our definition.
  \item The $ZW$-calculus as discussed in \cite{hadzihasanovic2017algebra,HNW} is exactly
    what we call $ZW$. The calculus however do not use phases on the
    black nodes. The original $ZW$-calculus introduced in
    \cite{hadzihasanovic2015diagrammatic} by the same author is
    slightly different. Intuitively it corresponds to a different kind
    of graphical languages where the two Frobenius algebras have been
    made compatible with a third compact structure.
The fact that other calculi of the form $Z^\alpha W_\beta$ exist
essentially amounts to the same $\pi$-commutation rule as before.
  \item The $ZH$-calculus as discussed in \cite{backens2018zh} is
    exactly what we call $ZH_{(\sqrt{2}, -\frac{1}{2})}$. However the authors do not use
    phases on the white node, and use a different parametrizations of
    the phases on the black node. The phase they call $x$ is what
    we would call the phase $(1, 1-2x)_H$. This makes the spider rule more
    awkward in their calculus. The fact that other calculi of the form $Z^\alpha H_\beta$ exist is
    linked to the following rule: $ 2\frac{\lambda+1}{\lambda}(1, \lambda)_Z \circ H \circ (1,\frac{1}{2}\frac{1}{\lambda +1})_H = (1,2(\lambda +1))_H \circ H \circ  (1, \frac{1}{\lambda})_Z$ where $(a, b)_Z$ is a phase of $Z$, $(a, b)_H$ is a phase of $H$ and $H$ is the Hadamard gate.
  \end{itemize}  

\subsection{Generalization for qudits}

A similar classification could be theoretically done for other
dimensions, i.e. for the prop $\textbf{Mat}_{\mathbb{C}^d}$. However
difficulties arise. Indeed, all possible algebras have been
classified only in dimension $d \leq 6$ \cite{poonen2008isomorphism}
(in fact there are an infinite number of non isomorphic commutative
algebras of dimension $7$), and the work is even more terse on
bigebras (some work \cite{dekkar2008bialgebra} has been done for bi\emph{algebras} in dimension
$2$ and $3$) or Frobenius algebras (although a theoretical
characterization exist).

We will therefore focus here on generalizations of the existing
structures of dimension $2$ to higher dimension.

The $ZX$-structure  corresponds to an interaction between the two
algebras $\mathbb{C}^2$ and  $\mathbb{C}[\mathbb{Z}/2\mathbb{Z}]$.
One could readily generalize this to higher dimensions replacing $
\mathbb{Z}/2\mathbb{Z}$ by $\mathbb{Z}/d\mathbb{Z}$, or actually any
other commutative group of size $d$.
In higher dimension, the dualizer actually becomes non trivial: it
corresponds to the morphism $x \mapsto -x$ in
$\mathbb{Z}/d\mathbb{Z}$, which is trivial only if $d = 2$.

The $ZW$-structure can also be generalized easily. We again replace
the algebra $\mathbb{C}^2$ by $\mathbb{C}^d$. and we can generalize
the $W$ algebra in dimension $2$ to the algebra 
$\quot{\mathbb{C}[X]}{\left(X^d\right)}$.

We did not find any generalization of $ZH$ that work in arbitrary
dimension.
The obvious $d$-dimensional generalization of $H$ would be
as a contracted monoid algebra $\mathbb{C}M$ where $M$ is the monoid $({\cal
  F}, \cap)$ for a family of subsets ${\cal F}$ closed under
intersection (the $2$-dimensional version corresponding to ${\cal F} = \{ \{1\}, \emptyset\}$).
This generalization gives indeed two Frobenius algebra that satisfy a bigebra law, but they usually
are not compatible (unless ${\cal F} = 2^X$ for some set $X$).

\subsection[In LinRel]{In $\textbf{LinRel}_{\mathbb{K}}$}

Quantum computing is not the only place where graphical calculi appear: another $Z^*$-algebra that
occurs in the literature is \textit{Graphical linear algebra} \cite{bonchi2017interacting} in the
prop $\textbf{LinRel}_{\mathbb{K}}$. In $\textbf{LinRel}_{\mathbb{K}}$ a map $n\to m$ is a linear
subspace of $\mathbb{K}^{n+m}$.

It turns out that there are only two monoids in $\textbf{LinRel}_{\mathbb{K}}$, and they are not isomorphic:
the monoid given by the subspace $\{(x,x,x), x\in \mathbb{K}\}$ and the monoid given by $\{(x,y,x+y), x,y\in
\mathbb{K}\}$. Their respective phase groups are both trivial. Both these monoids, that we call $B$ and $N$, actually happen to have Frobenius algebra structures.
	
\begin{proposition}\label{monlin}
  There are only four $Z^*$-algebras in $\textbf{LinRel}_{\mathbb{K}}$: $BB$, $NN$, $BN$ and $NB$.
\end{proposition}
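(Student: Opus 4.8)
The plan is to leverage the monoid classification just stated: since $B$ and $N$ are the only two monoids in $\textbf{LinRel}_{\mathbb{K}}$ up to isomorphism, and since every Frobenius algebra has an underlying commutative monoid, the first task is to pin down how many Frobenius algebras sit over each of $B$ and $N$. I would first show that each of the two monoids carries an \emph{essentially unique} commutative Frobenius structure. This is a short finite check: a counit $\epsilon:1\to 0$ and a unit $\eta:0\to 1$ are each subspaces of $\mathbb{K}$, hence each is either $\{0\}$ or $\mathbb{K}$, so only a handful of candidates exist; imposing the (co)unit laws and the Frobenius law singles out one co-monoid, and the triviality of the phase group (already noted) leaves no further freedom up to isomorphism. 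I would record the two resulting Frobenius algebras explicitly as the \emph{diagonal} spider $\{(x,\dots,x)\}$ for $B$ and the \emph{sum} spider $\{\sum_i x_i=\sum_j y_j\}$ for $N$, together with their compact structures.

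Given this, the upper bound is immediate: a $Z^*$-algebra is built from two Frobenius algebras, each of which must be one of the two above, so the only candidates are the four ordered pairs $BB$, $NN$, $BN$, $NB$. It then remains to check that all four genuinely satisfy the two axioms of a $Z^*$-algebra, namely compatibility of the two Frobenius algebras and the bigebra rule \ref{bigebra} between the co-monoid of the first and the monoid of the second. Note that only rule \ref{bigebra} is required, not the full bialgebra law, which keeps the verification light.

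For compatibility I would compute the dualizer directly as a relational composite of the cup of one algebra with the cap of the other. The cups and caps of $B$ are carried by the diagonal $\{(x,x)\}$ and those of $N$ by the anti-diagonal $\{(x,-x)\}$, so the dualizer comes out as the identity in the homogeneous cases $BB$, $NN$ and as the negation $x\mapsto -x$ in the mixed cases $BN$, $NB$; since negation squares to the identity, the dualizer is an involution in every case and compatibility holds. For the bigebra rule I would check rule \ref{bigebra} pairwise: the mixed cases $BN$ and $NB$ are the classical ``copy-through-sum'' interacting bialgebra and its transpose, while the homogeneous cases $BB$ and $NN$ follow from the observation that both $B$ and $N$ are \emph{special} Frobenius algebras in $\textbf{LinRel}_{\mathbb{K}}$ (one verifies $\mu\circ\Delta=id$ for each), so that each forms a bialgebra with itself exactly as $Z$ does in the Qubit case.

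The main obstacle is the first step: establishing that each monoid supports a \emph{unique} Frobenius algebra up to isomorphism, so that the four pairs really are the only candidates. Everything after that is a routine, fully explicit computation of relational composites of one- and two-dimensional subspaces. I should be careful that ``up to isomorphism'' is handled correctly, but here this is painless precisely because the phase groups are trivial, which removes the usual freedom of rescaling the Frobenius form.
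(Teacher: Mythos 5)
Your proposal is correct, but it distributes the work in exactly the opposite way to the paper, so the two proofs are complementary rather than parallel. The paper's appendix proof consists almost entirely of the two facts you take as given: a direct computation showing that the diagonal $\{(x,x,x)\}$ and the addition relation $\{(x,y,x+y)\}$ are the only unital subspaces of $\mathbb{K}^3$ (hence the only monoids), and that both phase groups are trivial; it then delegates the remaining verification --- that the four pairs really are $Z^*$-algebras --- to the equations already established in \cite{bonchi2017interacting}. You do the reverse: you import the classification and phase-triviality from the main text and verify the axioms by hand. Your verification is sound: the caps/cups are indeed $\{(x,x)\}$ for $B$ and $\{(x,-x)\}$ for $N$, so the dualizer is the identity in the homogeneous cases and $x\mapsto -x$ in the mixed ones, hence always an involution, which (by the paper's remark that the snake equations make the right-hand side of the compatibility equation the inverse of the dualizer) is exactly compatibility; the bigebra rule \ref{bigebra} for $BN$ and $NB$ is the classical copy/addition interaction, and for $BB$ and $NN$ your appeal to specialness ($\mu\circ\Delta=id$ holds for both relations) is the same mechanism the paper invokes to explain why $ZZ$ exists in $\textbf{Qubits}$. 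For the uniqueness of the Frobenius structure over each monoid, rather than re-deriving it by enumerating counits, you could simply quote Lemma \ref{famfrob}: the co-monoids completing a fixed monoid to a Frobenius algebra are exactly the phase-shifted ones, so trivial phase groups give uniqueness at once. One caveat: read as a standalone proof of the proposition, yours is incomplete, because the monoid classification and the phase-group triviality are not prior results in the paper --- they are precisely the content of the appendix proof of this very proposition --- so a complete write-up would still have to include that subspace computation. What your route buys in exchange is self-containedness on the verification side, where the paper leans on an external citation.
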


As these are the only potential candidates, we just have to check that they indeed give
$Z^*$-algebras. A detailed proof is given in Appendix \ref{monlinproof}.
$BB$ and $NN$ are trivial, and $BN$ is just a dual version of $NB$, therefore the graphical calculi
of \cite{bonchi2017interacting} is THE only possible graphical calculus for this prop.

\section{Future works}

We have classified $Z^*$-algebras in $\textbf{Mat}_{\mathbb{C}^2}$ and $\textbf{LinRel}_{\mathbb{K}}$. Further investigations will concern other categories.  In the case of $\textbf{Mat}_{\textbf{R}}$ for a semiring $\textbf{R}$, generalizations of $ZW$ and $ZX$ exist. A natural question is which other $Z^*$-algebras exist in this setting. All the monoids and co-monoids we considered were commutative, the non commutative case is also of interest, leading to a more general notion of graphical language involving \textit{port graphs} or \emph{rotation systems}. An other direction would be to drop the unit and the compact structure and find what defines a graphical language in this case. This is necessary in infinite dimensional Hilbert spaces for example.

\clearpage

	\bibliography{supp}
  
\appendix
	
	\section{Proofs}
	
	\subsection{Classification of two dimensional algebras} \phantomsection\label{studyproof}
	
		\begin{theorem}[\cite{study1890systeme}]
		In Qubit, any algebra is isomorphic either to $Z$ or to $W$.
	\end{theorem}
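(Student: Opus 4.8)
The plan is to identify a monoid in $\mathbf{Qubits}$ with a two-dimensional unital commutative $\mathbb{C}$-algebra $A$: the morphism $\eta$ selects the unit $u \coloneqq \eta(1) \in \mathbb{C}^2$, and $\mu$ is the associative commutative multiplication. An isomorphism of monoids is exactly an invertible $2\times 2$ matrix $\phi$ intertwining the two multiplications and sending one unit to the other, i.e.\ an algebra isomorphism realized by a change of basis of $\mathbb{C}^2$. So it suffices to classify two-dimensional commutative unital $\mathbb{C}$-algebras up to linear change of basis, and to exhibit the isomorphisms as invertible matrices.

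First I would note that $u \neq 0$, so $\{u\}$ extends to a basis $\{u, x\}$ of $\mathbb{C}^2$. Since $x^2 \in A = \langle u, x\rangle$, we may write $x^2 = \alpha u + \beta x$ for scalars $\alpha,\beta \in \mathbb{C}$; in particular $A$ is generated by the single element $x$ together with $u$, hence automatically commutative, and $A \cong \mathbb{C}[X]/(X^2 - \beta X - \alpha)$. The next step is to complete the square: replacing $x$ by $x' \coloneqq x - \tfrac{\beta}{2} u$ (an invertible change of basis fixing $u$) gives $x'^2 = \gamma u$ with $\gamma \coloneqq \alpha + \tfrac{\beta^2}{4}$. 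Thus every such algebra is isomorphic to $\mathbb{C}[X]/(X^2 - \gamma)$ for some $\gamma \in \mathbb{C}$.

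I then split on whether $\gamma$ vanishes. If $\gamma = 0$, then $A \cong \mathbb{C}[X]/(X^2)$, the dual numbers, which is exactly $W$ with $u$ playing the role of $\ket{0}$ and $x'$ the role of the nilpotent $\ket{1}$. If $\gamma \neq 0$, I would use that $\mathbb{C}$ is algebraically closed to write $\gamma = \delta^2$ with $\delta \neq 0$; rescaling $y \coloneqq \delta^{-1} x'$ yields $y^2 = u$, so $e_\pm \coloneqq \tfrac12(u \pm y)$ are orthogonal idempotents with $e_+ + e_- = u$, $e_\pm^2 = e_\pm$ and $e_+ e_- = 0$. In the basis $\{e_+, e_-\}$ the multiplication is componentwise and the unit is $e_+ + e_-$, so $A \cong \mathbb{C}\times\mathbb{C}$, which is $Z$. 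Finally, the two outcomes are genuinely distinct: $Z \cong \mathbb{C}\times\mathbb{C}$ is reduced (it has no nonzero nilpotents), whereas $W$ contains the nilpotent $\ket{1}$, so no algebra isomorphism can exchange them.

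The only place where anything nontrivial happens is the case $\gamma \neq 0$, where the existence of a square root of $\gamma$ — equivalently the diagonalization that splits the algebra into idempotents — relies essentially on $\mathbb{C}$ being algebraically closed; over a non-closed field one would obtain further non-isomorphic algebras such as field extensions. Everything else is a routine change of basis, so I expect no real obstacle: the content of the statement is precisely the dichotomy between the \emph{split} (semisimple) algebra $Z$ and the \emph{local} (nilpotent) algebra $W$.
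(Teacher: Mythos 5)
Your proof is correct, and it shares its backbone with the paper's: both arguments pick a basis containing the unit, observe that the algebra is determined by the single product $x^2=\alpha u+\beta x$, and complete the square to reach the normal form $X^2=\lambda$. Where you diverge is in how the classification is finished. The paper at that point writes down a general invertible matrix $\begin{pmatrix}a&b\\c&d\end{pmatrix}$ and solves the full system of equations expressing that it intertwines two normal forms; this shows the isomorphism classes are the values of $\lambda$ up to multiplication by nonzero squares, which simultaneously handles non-closed fields (giving, e.g., the three classes over $\mathbb{R}$) and implicitly records what the intertwiners are --- information the appendix reuses when it classifies automorphisms and $Z^*$-algebras. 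You instead invoke algebraic closedness right away to rescale $\lambda\neq 0$ to $1$ and then pass to the idempotent basis $e_\pm=\tfrac12(u\pm y)$; note that this is exactly the paper's closing change of basis $\tfrac{1}{\sqrt2}\begin{pmatrix}1&-1\\1&1\end{pmatrix}$, so the two proofs land on the same matrices. Your route is shorter and more conceptual: it makes the dichotomy semisimple ($Z\cong\mathbb{C}\times\mathbb{C}$) versus local with nilpotents ($W\cong\mathbb{C}[X]/(X^2)$) explicit, and your reducedness argument gives a cleaner proof that $Z\not\cong W$ than the paper's, where distinctness only falls out of the equivalence-class computation ($\mu=\lambda/d^2$ forces $\lambda=0\Leftrightarrow\mu=0$). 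What you lose relative to the paper is the by-product classification of all intertwining maps, which the rest of the appendix actually needs.
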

	
	\begin{proof}
		We are looking for all unital algebras up to isomorphism in $\mathbb{K}^2$, whith $\textbf{char}(\mathbb{K})\neq 2$.

	Given an algebra with unit, we choose a basis $\left(\ket{0},\ket{1}\right)$ where $\ket{0}$ is the unit. Then the matrix representation of the monoid is $\begin{pmatrix}
	1&0&0&x\\ 0&1&1&y
	\end{pmatrix}$. The change of basis $\begin{pmatrix}
	1&\frac{y}{2}\\ 0&1
	\end{pmatrix}$ gives $\begin{pmatrix}
	1&0&0&\lambda\\ 0&1&1&0
	\end{pmatrix}$ with $\lambda\coloneqq x+\frac{y^2}{2}$. Let $\begin{pmatrix}
	a&b\\ c&d
	\end{pmatrix}$ be an invertible matrix. Its determinant is $\Delta\coloneqq ad-bc\neq 0$. We want:\\
	
	\begin{center}
	$\begin{pmatrix}
	a&b\\ c&d
	\end{pmatrix}\begin{pmatrix}
	1&0&0&\lambda\\ 0&1&1&0
	\end{pmatrix}\begin{pmatrix}
	d&-b\\ -c&a
	\end{pmatrix}^{\otimes 2}=\Delta^2 \begin{pmatrix}
	1&0&0&\mu\\ 0&1&1&0
	\end{pmatrix}$
	\end{center}
	
	
	this gives the following system:
	
	\begin{center}
	$\begin{cases}
	\Delta=ad-bc\neq 0\\
	\Delta^2=ad^2-2bcd+\lambda ac^2\\
	0=c\left(b^2 - \lambda a^2\right)\\
	\Delta^2 \mu=\lambda a^3 -ab^2\\
	0=c\left(\lambda c^2 -d^2\right)\\
	\Delta^2=ad^2-\lambda ac^2\\
	0=b^2 c-2abd+\lambda a^2 c
	\end{cases}$
	\end{center}
	
	If $c\neq 0$ then we have $d^2=\lambda c^2$ and then $\Delta^2=0$, a contradiction. Setting $c=0$ the system reduces to:
	
	\begin{center}
	$\begin{cases}
	\Delta=ad\neq 0\\
	\Delta^2=ad^2\\
	\Delta^2 \mu=\lambda a^3 -ab^2\\
	0=-2abd\\
	c=0
	\end{cases}\Rightarrow \quad\begin{cases}
	\Delta=ad\neq 0\\
	\Delta^2=ad^2\\
	\Delta^2 \mu=\lambda a^3\\
	b=0\\
	c=0
	\end{cases}\Rightarrow \quad\begin{cases}
	d\neq 0\\
	\mu=\frac{\lambda}{d^2}\\
	a=1\\
	b=0\\
	c=0
	\end{cases}$
	\end{center}
		
	Finally we have $a=1$, $b=0$, $c=0$ and $d\neq 0$. The equivalence classes correspond to the elements of $\mathbb{K}$ up to multiplication by non-zero squares. We have three equivalence classes in $\mathbb{R}$: $\lambda< 0$, $\lambda> 0$ and $\lambda= 0$. In $\mathbb{C}$ there are only two $\lambda=0$ and $\lambda\neq0$. The case $\lambda\neq0$ admit a very simple representative: the change of basis $\frac{1}{\sqrt{2}}\begin{pmatrix}
	1&-1\\
	1&1
	\end{pmatrix}$ gives $\begin{pmatrix}
	1&0&0&0\\
	0&0&0&1
	\end{pmatrix}$.
	
	\end{proof}

	\subsection{Proof of Theorem \ref{zalg}} \phantomsection\label{zalgproof}
	 
	 To simplify our classification up to isomorphism, we start by identifying all the algebra automorphisms in $\textbf{Qubits}$. 
	 
	 \begin{proposition}
	 	The unique non-trivial automorphisms of $\mu_Z$ and $\mu_W$ are respectively $\begin{pmatrix} 0&1\\ 1&0\end{pmatrix}$ and the matrices of the form $ \begin{pmatrix} 1&0\\ 0&a\end{pmatrix}$ with $a\in\mathbb{C}^*$.
	 \end{proposition}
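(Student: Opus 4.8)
The plan is to unwind the definition of an algebra automorphism in the prop $\textbf{Qubits}$: such a map is an invertible $2\times 2$ matrix $P$ (a morphism $1\to 1$) satisfying $P\circ\eta = \eta$ and $P\circ\mu = \mu\circ(P\otimes P)$, i.e. $P$ fixes the unit and commutes with the product. Writing $P = \begin{pmatrix} a & b \\ c & d\end{pmatrix}$, each case becomes a small polynomial system that one could solve directly; but the cleanest route is to first recognize each algebra abstractly and exploit the fact that canonical algebraic data must be preserved.

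For $Z$, note that $\eta_Z = \ket{0}+\ket{1}$ is the unit while $\ket{0},\ket{1}$ are orthogonal idempotents, so $Z \cong \mathbb{C}\times\mathbb{C}$ with $\ket{0},\ket{1}$ its two primitive idempotents. Any automorphism must send a primitive idempotent to a primitive idempotent, hence permutes $\{\ket{0},\ket{1}\}$, leaving exactly the identity and the swap $\begin{pmatrix} 0 & 1 \\ 1 & 0\end{pmatrix}$. Concretely, $P\circ\eta_Z = \eta_Z$ gives $a+b = c+d = 1$, and matching $P\mu_Z = \mu_Z(P\otimes P)$ column by column forces $a,b,c,d \in\{0,1\}$ with $ab = cd = 0$; together with invertibility (ruling out the two singular sign patterns) this isolates precisely those two matrices.

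For $W$, the unit is $\eta_W = \ket{0}$ and $\ket{1}$ is nilpotent with $\ket{1}^2 = 0$, so $W \cong \mathbb{C}[X]/(X^2)$ with $\ket{0}\leftrightarrow 1$ and $\ket{1}\leftrightarrow X$. The condition $P\circ\eta_W = \eta_W$ fixes the unit, forcing the first column of $P$ to be $\ket{0}$, so $P = \begin{pmatrix} 1 & b \\ 0 & d\end{pmatrix}$. Since an automorphism preserves nilpotents, $P\ket{1}$ must square to zero; expanding $\mu_W(P\otimes P) = P\mu_W$ collapses exactly to $b^2 = 0$ (the accompanying entry $2bd$ then vanishes automatically), whence $b = 0$, and invertibility gives $d = a\in\mathbb{C}^*$, yielding the stated family $\begin{pmatrix} 1 & 0 \\ 0 & a\end{pmatrix}$.

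The computations are entirely routine; the only point requiring care is \emph{completeness} — one must be sure these are all automorphisms and not merely the visible ones. This is where the structural reading pays off: the unit together with the primitive idempotents (for $Z$), respectively the unit together with the nilpotent radical (for $W$), are canonically attached to the algebra and so are preserved by any automorphism, which pins the matrices down to the short lists above, the trivial automorphism being the identity in each case.
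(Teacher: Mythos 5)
Your proof is correct, and it takes a genuinely different route from the paper's. The paper works directly with the conjugation identity $P\,\mu\,\bigl((P^{-1})^{\otimes 2}\bigr)=\mu$, clears denominators using the adjugate (hence the $\Delta^2$ factors), and then grinds through the resulting polynomial system in $a,b,c,d$ by case analysis ($a=0$, $a\neq 0$ and $b\neq 0$, etc.); it is entirely computational and makes no appeal to the structure of the algebras. You instead identify $Z\cong\mathbb{C}\times\mathbb{C}$ and $W\cong\mathbb{C}[X]/(X^2)$ and use the fact that any algebra automorphism preserves canonically defined data --- the set of primitive idempotents in the first case, the unit together with the nilpotent elements in the second --- which pins down the candidate matrices almost immediately; your residual computations (idempotency of the entries and $ab=cd=0$ for $Z$; $b^2=0$, $2bd=0$ for $W$) then serve both to confirm the forms and to verify that every matrix on the list really is an automorphism, so completeness and sufficiency are both handled. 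One small point in your favor: unit preservation, which you impose as part of the definition, is in fact automatic for any invertible multiplication-preserving map (the image of the unit is again a unit, and units are unique), so your hypotheses agree with the paper's, which only constrains $\mu$. The trade-off is the usual one: the paper's brute-force setup is self-contained and is recycled verbatim for its other classification computations (the Study theorem and the bigebra pairs), whereas your structural argument is shorter, explains \emph{why} the automorphism groups are $\mathbb{Z}/2\mathbb{Z}$ and $\mathbb{C}^*$ respectively, and would extend conceptually to higher-dimensional analogues where the paper's explicit systems become unwieldy.
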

	 
	\begin{proof}
		
		We start with $\mu_Z$:
		
		\begin{center}
			$\begin{pmatrix}
			a&b\\ c&d
			\end{pmatrix}\begin{pmatrix}
			1&0&0&0\\ 0&0&0&1
			\end{pmatrix}\begin{pmatrix}
			d&-b\\ -c&a
			\end{pmatrix}^{\otimes 2}= \Delta^2\begin{pmatrix}
			1&0&0&0\\ 0&0&0&1
			\end{pmatrix}$
		\end{center}
		
		
		this gives the following system:
		
		\begin{center}
			$\begin{cases}
			\Delta=ad-bc\neq 0\\
			\Delta^2=ad^2+bc^2\\
			0=-ab\left(c+d\right)\\
			0=ab\left(a+b\right)\\
			0=cd\left(c+d \right)\\
			0=-cbd-dac\\
			\Delta^2=cb^2+da^2
			\end{cases}$
		\end{center}
		
		If $a=0$ then:
		
		\begin{center}
			$\begin{cases}
			\Delta=bc\neq 0\\
			\Delta^2=bc^2\\
			0=cd\left(c+d \right)\\
			0=-cbd\\
			\Delta^2=cb^2
			\end{cases}\Rightarrow \quad\begin{cases}
			\Delta=bc\neq 0\\
			\Delta^2=bc^2\\
			\Delta^2=cb^2\\
			d=0
			\end{cases} \Rightarrow\quad \begin{cases}
			d=0\\
			b=1\\
			c=1
			\end{cases}$
		\end{center}
		
		the solution is $\begin{pmatrix}0&1\\1&0
		\end{pmatrix}$. If $a\neq 0$ and $b\neq 0$ we then have $\Delta=0$, a contradiction. If $a\neq 0$ and $b=0$:
		
		\begin{center}
			$\begin{cases}
			\Delta=ad\neq 0\\
			\Delta^2=ad^2\\
			0=cd\left(c+d \right)\\
			0=-dac\\
			\Delta^2=da^2\\
			b=0
			\end{cases}\Rightarrow \quad\begin{cases}
			\Delta=ad\neq 0\\
			\Delta^2=ad^2\\
			\Delta^2=da^2\\
			c=0\\
			b=0
			\end{cases} \Rightarrow \quad\begin{cases}
			a=d=1\\
			c=0\\
			b=0
			\end{cases}$
		\end{center}
		
		the solution is $\begin{pmatrix}1&0\\0&1
		\end{pmatrix}$. Now for $\mu_W$:
		
		\begin{center}
			$\begin{pmatrix}
			a&b\\ c&d
			\end{pmatrix}\begin{pmatrix}
			1&0&0&0\\ 0&1&1&0
			\end{pmatrix}\begin{pmatrix}
			d&-b\\ -c&a
			\end{pmatrix}^{\otimes 2}= \Delta^2\begin{pmatrix}
			1&0&0&0\\ 0&1&1&0
			\end{pmatrix}$
		\end{center}
		
		
		this gives the system:
		
		\begin{center}
		$\begin{cases}
		\Delta=ad-bc\neq 0\\
		\Delta^2=ad^2-2bcd\\
		0=b^2c\\
		0=ab^2\\
		0=cd^2\\
		\Delta^2=ad^2\\
		0=cb^2-abd
		\end{cases} \Rightarrow \quad\begin{cases}
		\Delta=ad\neq 0\\
		\Delta^2=ad^2\\
		b=c=0
		\end{cases} \Rightarrow \quad\begin{cases}
		d\neq 0\\
		a=1\\
		b=c=0
		\end{cases}$
		\end{center}
		the solutions are the matrices $\begin{pmatrix}1&0\\0&d
		\end{pmatrix}$ with $d\neq 2$.
		
	\end{proof}
	
	This result allows to find all the monoid/co-monoid pair satisfying the \ref{bigebra} rule.
	
	\begin{lemma}
		In \textbf{Qubits}, up to isomorphism, the only monoid/co-monoid pair satifying the \ref{bigebra} rule are $\mu_Z/\Delta_Z$, $\mu_X/\Delta_Z$, $\mu_W/\Delta_Z$, $\mu_H/\Delta_Z$, and $\mu_Z/\Delta_W$.
	\end{lemma}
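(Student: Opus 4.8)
The plan is to classify the pairs up to \emph{simultaneous} isomorphism (a single invertible $P:\mathbb{C}^2\to\mathbb{C}^2$ acting by conjugation on both the product and the coproduct) by first normalizing the co-monoid. By the dual of the classification of two-dimensional algebras (Appendix~\ref{studyproof}), every co-monoid in $\textbf{Qubits}$ is isomorphic to $\Delta_Z$ or to $\Delta_W$; since an isomorphism of pairs must send a co-monoid to an isomorphic co-monoid and $\Delta_Z\not\cong\Delta_W$, I may treat these two cases separately and know they do not overlap. Once $\Delta\in\{\Delta_Z,\Delta_W\}$ is fixed, the remaining gauge freedom is exactly the stabiliser $\mathrm{Aut}(\Delta)$; by transpose-duality (an automorphism $P$ of the co-monoid $\Delta$ is the same as an automorphism $P^{\top}$ of the monoid $\Delta^{\top}$) this is read off from the preceding Proposition, giving $\mathrm{Aut}(\Delta_Z)=\{id,\sigma\}$ and a one-parameter family of diagonal automorphisms for $\mathrm{Aut}(\Delta_W)$. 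So it suffices, for each fixed $\Delta$, to determine all monoids $\mu$ making $(\mu,\Delta)$ satisfy \ref{bigebra}, and then to quotient the solution set by $\mathrm{Aut}(\Delta)$.

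For fixed $\Delta$ I would write the unknown product as a general $2\times 4$ matrix $\mu$ constrained to be commutative, associative and unital, and impose \ref{bigebra}, namely $\Delta\circ\mu=(\mu\otimes\mu)\circ(id\otimes\sigma\otimes id)\circ(\Delta\otimes\Delta)$. Since both sides are linear maps $\mathbb{C}^4\to\mathbb{C}^4$, it is enough to test them on the basis vectors $\ket{i}\ket{j}$, which turns \ref{bigebra} into four vector equations in the columns $v_{ij}:=\mu(\ket{i}\ket{j})$.

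In the case $\Delta=\Delta_Z$ this system has a very clean reading: \ref{bigebra} becomes $v_{ij}\otimes v_{ij}=\Delta_Z(v_{ij})$ for each $i,j$, i.e. every product $v_{ij}$ must be \emph{copyable}, hence equal to $0$ or a basis vector. Combined with commutativity, associativity and unitality, this reduces the problem to enumerating the copyable bases of the only two two-dimensional algebras $\mathbb{C}\times\mathbb{C}$ and $\mathbb{C}[X]/(X^2)$. A short finite check then shows that, up to $\sigma$, these are exactly the four tables $\mu_Z,\mu_X,\mu_H$ (three copyable bases of $\mathbb{C}\times\mathbb{C}$: the two orthogonal idempotents, the unit together with its negation, and the unit together with one idempotent) and $\mu_W$ (the standard basis of $\mathbb{C}[X]/(X^2)$).

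For $\Delta=\Delta_W$ the equation $\Delta_W(v_{ij})=\dots$ no longer expresses copyability and must be solved directly; I expect the resulting genuinely quadratic system, after imposing associativity and unitality and quotienting by the diagonal automorphisms of $\Delta_W$, to admit only $\mu_Z$, yielding the single pair $\mu_Z/\Delta_W$. This is consistent with, and can be cross-checked by, the transpose-duality $(\mu,\Delta)\mapsto(\Delta^{\top},\mu^{\top})$ of \ref{bigebra}, which sends bigebra pairs to bigebra pairs and carries the $\mu_Z/\Delta_W$ solution to a pair isomorphic to $\mu_W/\Delta_Z$ found above. Assembling the two cases gives the five pairs $\mu_Z/\Delta_Z$, $\mu_X/\Delta_Z$, $\mu_W/\Delta_Z$, $\mu_H/\Delta_Z$, $\mu_Z/\Delta_W$. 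The main obstacle is the bookkeeping in the $\Delta_W$ case — solving the quadratic system without the copyability shortcut — and, throughout, correctly quotienting by the residual automorphism group so that solutions like $\mu_X$ and $\mu_H$, which are isomorphic to $\mu_Z$ as bare algebras but \emph{inequivalent} once paired with the fixed $\Delta_Z$, are neither conflated nor double-counted.
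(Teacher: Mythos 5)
Your overall strategy --- normalize the co-monoid to $\Delta_Z$ or $\Delta_W$ via the dual of the two-dimensional algebra classification, then solve the bigebra constraint for the monoid and quotient by the stabiliser $\mathrm{Aut}(\Delta)$ --- is exactly the strategy of the paper, and your $\Delta_Z$ case is in substance the paper's computation: evaluating rule \ref{bigebra} on basis vectors says precisely that every product $v_{ij}$ is copyable, i.e.\ lies in $\{0,\ket{0},\ket{1}\}$, which is the paper's polynomial system ($a,b,c,d,e,f\in\{0,1\}$ with the exclusivity constraints); the finite enumeration modulo $\sigma$, discarding the non-associative tables, then yields $\mu_Z,\mu_X,\mu_H,\mu_W$. (One inessential slip there: inside $\mathbb{C}\times\mathbb{C}$ the basis realizing $\mu_X$ is the unit together with the nontrivial square root of the unit, $(1,-1)$, not the ``negation'' of the unit.)

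The genuine gap is the $\Delta_W$ case, which you do not prove: you only state that you \emph{expect} the quadratic system to admit $\mu_Z$ alone, and the transpose-duality you invoke cannot close this. Duality says $(\mu,\Delta)$ satisfies \ref{bigebra} iff $(\Delta^{\top},\mu^{\top})$ does, so it converts ``classify all monoids pairing with the co-monoid $\Delta_W$'' into ``classify all co-monoids pairing with the monoid $\Delta_W^{\top}\cong\mu_W$''. But the problem you actually solved is ``monoids pairing with $\Delta_Z$'', whose dual is ``co-monoids pairing with $\mu_Z$'' --- a different problem. Consequently duality certifies only that $\mu_Z/\Delta_W$ \emph{is} a solution (because its dual $\mu_W/\Delta_Z$ is one); it says nothing about uniqueness, and nothing in your argument excludes, say, a monoid isomorphic to $W$, $X$ or $H$ forming a bigebra with $\Delta_W$. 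Ruling these out is exactly the content of the computation the paper performs: writing $\mu=\begin{pmatrix} a&b&b&c\\ d&e&e&f\end{pmatrix}$ and imposing \ref{bigebra} against $\Delta_W$, the system collapses quickly to a one-parameter family of rank-$2$ solutions, and that family is a single orbit under the diagonal automorphisms of $\Delta_W$, hence one isomorphism class, $\mu_Z/\Delta_W$. The missing computation is short, and your framework (quotienting by $\mathrm{Aut}(\Delta_W)$) is already set up to absorb the one-parameter family, but until it is carried out your classification is incomplete precisely in the one case where the copyability shortcut is unavailable.
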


	\begin{proof}
		
		There are only two co-algebras up to isomorphism, $\Delta_Z$ and $\Delta_W$.
		
		Any algebra is of the form: $\begin{pmatrix}
		a&b&b&c\\ d&e&e&f
		\end{pmatrix}$.
		
		We start by finding all the algebras satisfying \ref{bigebra} with $\Delta_W$.
		
		We want: \begin{center}
		$\Delta_W \circ \begin{pmatrix}
		a&b&b&c\\ d&e&e&f
		\end{pmatrix}=\begin{pmatrix}
		a&b&b&c\\ d&e&e&f
		\end{pmatrix}^{\otimes 2}\left[I_2\otimes \begin{pmatrix}
		1&0&0&0\\ 0&0&1&0\\ 0&1&0&0\\ 0&0&0&1
		\end{pmatrix}\otimes I_2\right]{\Delta_W}^{\otimes 2}$
		\end{center}
		
		This gives the following system:
		
		\begin{center}
			\hspace{-0.5cm}$\begin{cases}
			0=a\left(a-1\right)\\
			0=d\left(a-1\right)\\
			0=d^2\\
			0=b\left(2a-1\right)\\
			0=e\left(a-1\right)+dc\\
			0= de\\
			0= c\left(2a-1\right) +2b^2\\
			0= f\left(a-1\right)+2be+cd\\
			0= 2df+2e^2
			\end{cases} \hspace{-1cm}\Rightarrow 
			\begin{cases}
			0=a\left(a-1\right)\\
			d=0\\
			0=b\left(2a-1\right)\\
			0= c\left(2a-1\right) +2b^2\\
			0= f\left(a-1\right)+2bc\\
			e= 0
			\end{cases} \Rightarrow
			\begin{cases}
			\text{ if $a=0$: } \begin{cases}
			\begin{tabular}{cc}
			$a=0$ & $b=0$\\
			$c=0$ & $d=0$\\
			$e=0$ & $f=0$\\
			\end{tabular}
			\end{cases}\\
			\\
			\text{ if $a\neq 0$: } \begin{cases}
			\begin{tabular}{cc}
			$a=1$ & $b=0$\\
			$c=0$ & $d=0$\\
			$e=0$ & $f\in \mathbb{C}$\\
			\end{tabular}
			
			\end{cases}\\
			\end{cases}$
		\end{center}
		
		The only rank $2$ solution are the $\begin{pmatrix}
		1&0&0&0 \\ 0&0&0&f
		\end{pmatrix}$ with $f\in \mathbb{C}^*$. They are algebras with units $\begin{pmatrix}
		1\\ \frac{1}{f}
		\end{pmatrix}$. Since $\begin{pmatrix}
		1&0\\0&f
		\end{pmatrix}$ is an automorphism of $\Delta_W$ this gives a unique pair up to isomorphism: $\mu_Z/\Delta_W$.
		
		Now with $\Delta_Z$, we want:
		\begin{center}
		$\Delta_Z \circ \begin{pmatrix}
		a&b&b&c\\ d&e&e&f
		\end{pmatrix}=\begin{pmatrix}
		a&b&b&c\\ d&e&e&f
		\end{pmatrix}^{\otimes 2}\left[I_2\otimes \begin{pmatrix}
		1&0&0&0\\ 0&0&1&0\\ 0&1&0&0\\ 0&0&0&1
		\end{pmatrix}\otimes I_2\right]{\Delta_Z}^{\otimes 2}$
		\end{center}
		
		This gives the following system:
		
		\begin{center}
		$\begin{cases}
		
		\begin{tabular}{ccc}
		$a=a^2 $&$ 0=ad $&$ d=d^2$\\
		$b=b^2 $&$ 0=be $&$ e= e^2$\\
		$c= c^2 $&$ 0= cf $&$ f= f^2$
		\end{tabular}
		\end{cases} \Leftrightarrow\quad
		\begin{cases}
		a,b,c,d,e,f\in \{0,1\}\\
		(a\neq 1)\lor (d\neq 1)\\
		(b\neq 1)\lor (e\neq 1)\\
		(c\neq 1)\lor (f\neq 1)
		\end{cases}$
		\end{center}
		
		The rank $2$ solutions are: 
		
		\begin{center}
			\begin{tabular}{cccc}
				
				$\begin{pmatrix}
				1&0&0&0 \\ 0&0&0&1
				\end{pmatrix}$ &
				$\begin{pmatrix}
				1&0&0&1 \\ 0&1&1&0
				\end{pmatrix}$ &
				$\begin{pmatrix}
				0&1&1&0 \\ 1&0&0&1
				\end{pmatrix}$ &
				$\begin{pmatrix}
				1&1&1&0 \\ 0&0&0&1
				\end{pmatrix}$ \\[0.5cm]
				
				$\begin{pmatrix}
				1&0&0&0 \\ 0&1&1&1
				\end{pmatrix}$&
				$\begin{pmatrix}
				1&0&0&0 \\ 0&1&1&0
				\end{pmatrix}$&
				$\begin{pmatrix}
				0&1&1&0 \\ 0&0&0&1
				\end{pmatrix}$&
				$\begin{pmatrix}
				0&0&0&1 \\ 1&0&0&0
				\end{pmatrix}$\\ [0.5cm]
				
				$\begin{pmatrix}
				0&1&1&1 \\ 1&0&0&0 
				\end{pmatrix}$&
				$\begin{pmatrix}
				0&0&0&1\\1&1&1&0 
				\end{pmatrix}$&
				$\begin{pmatrix}
				0&0&0&1 \\ 0&1&1&0
				\end{pmatrix}$& 
				$\begin{pmatrix}
				0&1&1&0 \\ 1&0&0&0
				\end{pmatrix}$
			\end{tabular}
		\end{center}
		
		Since $\begin{pmatrix}
		0&1\\1&0
		\end{pmatrix}$ is an automorphism of $\Delta_Z$, this reduces the possibilities to:\begin{center}
			\begin{tabular}{cccc}
				
				$\begin{pmatrix}
				1&0&0&0 \\ 0&0&0&1
				\end{pmatrix}$ &
				$\begin{pmatrix}
				1&0&0&1 \\ 0&1&1&0
				\end{pmatrix}$ &
				$\begin{pmatrix}
				1&1&1&0 \\ 0&0&0&1
				\end{pmatrix}$ &
				$\begin{pmatrix}
				1&0&0&0 \\ 0&1&1&0
				\end{pmatrix}$ \\[0.5cm]
				
				$\begin{pmatrix}
				0&0&0&1 \\ 1&0&0&0
				\end{pmatrix}$&
				$\begin{pmatrix}
				0&0&0&1\\1&1&1&0 
				\end{pmatrix}$& 
				$\begin{pmatrix}
				0&1&1&0 \\ 1&0&0&0
				\end{pmatrix}$
			\end{tabular}
		\end{center}
		
		But among them the last three are not algebras, they are not associative, a counter example for the three maps is the evaluation of $(\ket{0}*\ket{0})*\ket{1}$ versus $\ket{0}*(\ket{0}*\ket{1})$. The other are the algebras $\mu_Z$, $\mu_X$, $\mu_H$ and $\mu_W$.
		
		This gives $4$ pairs, $\mu_Z/\Delta_Z$, $\mu_X/\Delta_Z$, $\mu_W/\Delta_Z$ and $\mu_H/\Delta_Z$.
	\end{proof}
	
	Now we characterize all the possible Frobenius algebras given a fixed monoid or co-monoid.
	
		\begin{lemma}\label{famfrob}
		Given a commutative Frobenius algebra $F\coloneqq\left(\mu,\eta,\Delta,\epsilon\right)$, the co-monoids forming Frobenius algebras with $\left(\mu,\eta\right)$ are exactly the phase shifted co-monoids $\left(\Delta_\varphi,\epsilon_\varphi\right)$. $\left(\mu,\eta,\Delta_\varphi,\epsilon_\varphi\right)$ is called the phase shifted Frobenius algebra.
		\end{lemma}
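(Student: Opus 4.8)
The plan is to reduce the lemma to the standard principle that a commutative Frobenius algebra is the same data as a commutative monoid together with a non-degenerate pairing. From $F$ we obtain the cap $\nu = \epsilon \circ \mu$ and the cup $\delta$, which is its inverse by the snake equation, and the Frobenius relation reconstructs the comultiplication from the monoid and the cup as $\Delta = (\mu \otimes id)\circ(id \otimes \delta)$. The very same reconstruction applies to any commutative Frobenius algebra that shares the monoid $(\mu,\eta)$: such an algebra is completely determined by its counit $\epsilon'$, provided the associated pairing $\nu' = \epsilon' \circ \mu$ is non-degenerate. Hence the lemma amounts to identifying exactly which counits $\epsilon'$ can occur and recognising them as phase shifts of $\epsilon$.

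First I would check that phase shifts genuinely work: for any phase $\varphi$, the data $(\mu, \eta, \Delta(\varphi), \epsilon(1/\varphi))$ is a Frobenius algebra. The co-monoid axioms already hold for the phase-shifted co-monoid, so only the Frobenius relation remains, and it follows by sliding $\varphi$ through the original relation. The monoid phase equation lets $\varphi$ pass through $\mu$ and the dual co-monoid phase equation lets it pass through $\Delta$, so a single $\varphi$ can be moved to any leg, while the $\varphi^{-1}$ carried by $\epsilon(1/\varphi) = \epsilon\circ\varphi^{-1}$ cancels it where needed. The counit of this Frobenius algebra is exactly $\epsilon\circ\varphi^{-1}$.

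Conversely, let $(\mu, \eta, \Delta', \epsilon')$ be any commutative Frobenius algebra with the same monoid. Bending $\epsilon'$ upward with the original cup produces $w := (id \otimes \epsilon')\circ \delta : 0 \to 1$, and non-degeneracy of $\nu$ (the snake equation for $\delta,\nu$) yields $\epsilon' = \nu \circ (w \otimes id) = \epsilon \circ \psi$, where $\psi := \mu \circ (w \otimes id) : 1 \to 1$ is multiplication by $w$. By associativity and commutativity of $\mu$ such a multiplication map always satisfies the phase equation, so $\psi$ is a phase. It remains to see that $\psi$ is invertible: since $\psi$ is a phase, $\nu' = \epsilon'\circ\mu = \epsilon\circ\psi\circ\mu = \nu \circ (\psi \otimes id)$, and as both $\nu$ and $\nu'$ are non-degenerate this forces $\psi$ to be an isomorphism. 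Setting $\varphi := \psi^{-1}$ gives $\epsilon' = \epsilon(1/\varphi)$, and by the uniqueness of the comultiplication noted in the first paragraph, $\Delta' = \Delta(\varphi)$.

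The main obstacle is precisely the reconstruction and uniqueness of the Frobenius comultiplication from $(\mu,\eta,\epsilon)$, together with the invertibility of $\psi$ extracted from the non-degeneracy of the two pairings; once these are secured, the remaining phase bookkeeping is routine.
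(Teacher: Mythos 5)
Your proposal is correct, and its forward direction (phase shifts form Frobenius algebras, ``just moving the phases around'') is exactly the paper's. For the converse you even build essentially the same morphism: by the Frobenius law your $\psi=\mu\circ(w\otimes id)$ with $w=(id\otimes\epsilon')\circ\delta$ equals $(id\otimes\epsilon')\circ\Delta$, which is (up to mirror image) the kind of morphism the paper introduces. The difference is in how the two key steps are discharged. The paper stays explicit and diagrammatic: it defines \emph{two} morphisms, one from each counit paired against the other co-monoid, checks by direct computation that both satisfy the phase equation and that they are mutually inverse, and then verifies $\Delta'=\Delta(\alpha)$ and $\epsilon'=\epsilon(\alpha^{-1})$ by further diagram manipulation. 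You instead route everything through two pieces of general duality theory: (i) a Frobenius algebra over a fixed monoid is determined by its counit, via uniqueness of cups for a given cap together with the reconstruction $\Delta=(\mu\otimes id)\circ(id\otimes\delta)$; and (ii) a map $\psi$ intertwining two non-degenerate pairings, $\nu'=\nu\circ(\psi\otimes id)$, is automatically invertible (the change-of-duality argument). Both facts are standard and valid in any symmetric monoidal category, so your argument is sound; what it buys is conceptual clarity (Frobenius structures on a fixed monoid correspond exactly to twists of the counit by invertible elements) and it spares you from guessing the inverse, whereas the paper's computation is self-contained and yields the inverse explicitly as the mirror-image morphism built from $\epsilon$ and $\Delta'$. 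Two points of care if you write this up: by the paper's definition a phase is \emph{invertible}, so $\psi$ should only be called a phase after step (ii) is secured (before that it merely satisfies the phase equation, which you correctly derive from associativity and commutativity); and facts (i) and (ii), which you rightly flag as the main obstacles, must actually be proved, since the paper does not assume them but establishes everything by hand.
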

	
	\begin{proof}
		$\left(\Rightarrow\right)$ Given a phase $\alpha$, the phase-shifted co-monoid $\left(\Delta(\alpha),\epsilon(-\alpha)\right)$ forms a Frobenius algebra with $\left(\mu,\eta\right)$ (just moving around the phases).\\
		$\left(\Leftarrow\right)$ Let $\left(\input figs/d1 , \begin{tikzpicture}[baseline=(current bounding box.center)]
\begin{pgfonlayer}{nodelayer}
\node [style=bmonoid] (0) at (0.25,0.25) {};
\coordinate (1) at (0.25,0.75);

\end{pgfonlayer}
\begin{pgfonlayer}{edgelayer}
\draw[] (0.center) to[out= 90, in=-89.9] (1.center);

\end{pgfonlayer}
\end{tikzpicture}
 \right)$ be a co-monoid forming a Frobenius algebra with $\left(\mu,\eta\right)$. We define the morphisms {\input figs/p2 } and {\input figs/p1 }. Those morphisms satisfy the phase equation: 
		\begin{center}
			{\input figs/phaseproof1 } and {\input figs/phaseproof2 }
		\end{center}
		Furthermore they are inverse of each other: 
		\begin{center}
			{\input figs/invproof1 } and {\input figs/invproof2 }
		\end{center}
		We call them $\alpha$ and $\alpha^{-1}$. Furthermore we have:

		\begin{center}
			{\input figs/pproof1 } and {\input figs/pproof2 }
		\end{center}
		
		Finally $\Delta'=\Delta(\alpha)$ and $\epsilon'=\epsilon(-\alpha)$.
	\end{proof}
	
	We are ready to classify the $Z^*$-algebras.
	
		\begin{theorem}
		The only $Z^*$-algebras up to isomorphism in $\textbf{Qubits}$ are, with $a,b\in \mathbb{C}^*$: $ Z^{(a,b)} Z_{(\frac{1}{a},\frac{1}{b})}$, $ Z^{(a,b)} Z_{(-\frac{1}{a},\frac{1}{b})}$, $ Z^{(a,b)} Z_{(\frac{1}{a},-\frac{1}{b})}$, $ Z^{(a,b)} Z_{(-\frac{1}{a},-\frac{1}{b})}$, $ Z^{(a,1)} X_{(\frac{2}{a},1)}$, $ Z^{(a,1)} X_{(-\frac{2}{a},1)}$, $ Z^{(a,-1)} X_{(\frac{2}{a},1)}$, $ Z^{(a,-1)} X_{(-\frac{2}{a},1)}$, $Z^{(a,\frac{4}{a^2 b^2})} X_{(b,-1)}$, $Z^{(a,-1)} X_{(b,\frac{4}{a^2 b^2})}$, $ Z^{(a,\frac{1}{a^2 b^2-1})} H_{(b,\frac{1-a^2 b^2}{a^2 b^2})}$ with $a^2 b^2\neq 1$, $ Z^{(a,\frac{1}{a^2b^2})} W_{(b,0)}$ and $ W^{(a,0)} Z_{(b,\frac{1}{b^2 a^2})}$.
	\end{theorem}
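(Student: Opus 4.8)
The plan is to combine the three structural results already established. By the preceding lemma, once we fix representatives up to isomorphism, the co-monoid of the first Frobenius algebra $F$ together with the monoid of the second one $G$ must form one of the five canonical bigebra pairs $\mu_Z/\Delta_Z$, $\mu_X/\Delta_Z$, $\mu_W/\Delta_Z$, $\mu_H/\Delta_Z$ and $\mu_Z/\Delta_W$. Each of these is the seed of one family in the final list: the first four extend the co-monoid $\Delta_Z$, so $F$ is a $Z$-type Frobenius algebra while $G$ is of type $Z$, $X$, $W$, $H$ respectively, whereas the last extends $\Delta_W$, giving $F$ of type $W$ and $G$ of type $Z$. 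This produces precisely the five labels $ZZ$, $ZX$, $ZW$, $ZH$ and $WZ$.

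First I would reconstruct the two full Frobenius algebras. The bigebra pair only fixes the co-monoid of $F$ and the monoid of $G$; their missing halves are constrained by Lemma \ref{famfrob}. Read dually and applied to $F$, whose co-monoid is fixed, the lemma says that every monoid completing it to a commutative Frobenius algebra is a phase-shift of the canonical one, so $F$ is the corresponding canonical Frobenius algebra with a phase $\alpha$ applied to its monoid. Applied to $G$, whose monoid is fixed, every co-monoid completing it is a phase-shift, so $G$ is the canonical Frobenius algebra with a phase $\beta$ applied to its co-monoid. Hence each family is a two-parameter family $F^{\alpha}G_{\beta}$, and reading off the explicit phase matrices tabulated above turns $\alpha$ and $\beta$ into the scalars $a,b$ of the statement.

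Next I would impose compatibility, the only remaining axiom. Each Frobenius algebra carries a compact structure with cap $\epsilon\circ\mu$ and cup $\Delta\circ\eta$, and combining the cap of one algebra with the cup of the other yields the $1\to 1$ \emph{dualizer}. By the remark following the definition of compatibility, the snake equations force this morphism to be invertible with the reversed combination as inverse, so compatibility is exactly the requirement that the dualizer be an involution. Writing the dualizer as an explicit $2\times 2$ matrix in the phases of $F$ and $G$ and demanding that its square be the identity turns compatibility into a small polynomial system in $\alpha,\beta$ (equivalently in $a,b$), which I would then solve case by case.

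The computation is routine for most pairs, but the bookkeeping is where the real work lies. For $ZZ$ the two compact structures are diagonal, so the dualizer is diagonal and the involution condition splits each diagonal entry into a sign $\pm 1$; this produces exactly the four $Z^{(a,b)}Z_{(\pm 1/a,\pm 1/b)}$ cases. For $ZX$ the $X$-structure is expressed in a rotated basis, the dualizer is no longer diagonal, and the quadratic involution condition has several solution branches, giving the six listed $ZX$ cases, the branches carrying a $-1$ phase reflecting the $\pi$-commutation rule. For $ZW$, $ZH$ and $WZ$ the system has a single branch, yielding one case each; note that $ZH$ forces $a^2b^2\neq 1$ and that in the $W$ families one phase is frozen to the additive neutral element ($b=0$ on the $W$ side), reflecting the additive phase group of $W$. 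The final and most delicate step is to quotient the solution set by the residual automorphisms of the fixed half-structures, namely the non-trivial automorphisms of $\mu_Z$ and $\mu_W$ computed earlier, so as to keep exactly one representative per isomorphism class and to check that the thirteen families are pairwise non-isomorphic; this step, most prone to over- or under-counting, is where I expect the main obstacle to be.
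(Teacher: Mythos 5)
Your proposal follows essentially the same route as the paper's proof: reduce to the five canonical bigebra pairs from the preceding lemma, reconstruct the two Frobenius algebras as phase-shifted candidates $F^{\alpha}G_{\beta}$ via Lemma \ref{famfrob} (and its dual), and translate compatibility into the involution condition on the phase-shifted dualizer, solved as a polynomial system case by case --- and the solution branches you predict (four sign choices for $ZZ$, three branches giving six $ZX$ families, single branches with $a^2b^2\neq 1$, $d=0$, $b=0$ for $ZH$, $ZW$, $WZ$) match the paper's computations exactly. The only divergence is your final step: the paper performs no quotient of the solution set by residual automorphisms and makes no claim of pairwise non-isomorphism of the listed families, since it has already spent the automorphism analysis inside the bigebra-pair lemma and reads the theorem as ``every $Z^*$-algebra is isomorphic to one on the list,'' so the obstacle you anticipate there simply does not arise in its argument.
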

	
	\begin{proof}
		The candidate $Z^*$-algebras are $Z^\alpha Z_\beta$, $Z^\alpha X_\beta$, $Z^\alpha W_\beta$, $Z^\alpha H_\beta$ and  $W^\alpha Z_\beta$. We only need to check compatibility. If $d$ is  without phase-shift and $\alpha$ and $\beta$ are the phase we do the phase-shifts with, compatibility corresponds to the equation: $\beta\circ d\circ \alpha=\alpha^{-1} \circ d^{-1}\circ \beta^{-1}$.
		
		\begin{itemize}
			\item $Z^\alpha Z_\beta$: The dualizer of $ZZ$ is the identity. Let $\alpha=(a,b)$ and $\beta=(c,d)$, $a,b,c,d\in\mathbb{C}^*$. $Z^\alpha$ and $Z_\beta$ are compatible iff 
			
			\begin{center}
				$c\begin{pmatrix}
				1&0\\0&d
				\end{pmatrix}a\begin{pmatrix}
				1&0\\0&b
				\end{pmatrix}=\frac{1}{a}\begin{pmatrix}
				1&0\\0&\frac{1}{b}
				\end{pmatrix}\frac{1}{c}\begin{pmatrix}
				1&0\\0&\frac{1}{d}
				\end{pmatrix}$
			\end{center}
		
		This gives the system:
		
		\begin{center}
			$\begin{cases}
			a^2 c^2=1\\
			
			b^2 d^2=1
			\end{cases}$
		\end{center}
	
	The $Z^*$-algebras are then $ Z^{(a,b)} Z_{(\frac{1}{a},\frac{1}{b})}$, $ Z^{(a,b)} Z_{(-\frac{1}{a},\frac{1}{b})}$, $ Z^{(a,b)} Z_{(\frac{1}{a},-\frac{1}{b})}$ and $ Z^{(a,b)} Z_{(-\frac{1}{a},-\frac{1}{b})}$. The dualizer is the identity for $ Z^{(a,b)} Z_{(\frac{1}{a},\frac{1}{b})}$. 
			
			\item $Z^\alpha X_\beta$: The dualizer of $ZX$ is $\frac{1}{2}$, its inverse is $2$. let $\alpha=(a,b)$ and $\beta=(c,d)$, $a,b,c,d\in\mathbb{C}^*$. $Z^\alpha$ and $X_\beta$ are compatible iff 
			
			\begin{center}
				$c\begin{pmatrix}
				1+d&1-d\\1-d&1+d
				\end{pmatrix}\frac{a}{2}\begin{pmatrix}
				1&0\\0&b
				\end{pmatrix}=\frac{1}{a}\begin{pmatrix}
				1&0\\0&\frac{1}{b}
				\end{pmatrix}\frac{2}{c}\begin{pmatrix}
				1+\frac{1}{d}&1-\frac{1}{d}\\1-\frac{1}{d}&1+\frac{1}{d}
				\end{pmatrix}$
			\end{center}
			
			This gives the system:
			
			\begin{center}
				$\begin{cases}
				(da^2 c^2 -4)(1+d)=0\\
				(a^2 c^2 bd+4)(1-d)=0\\
				(a^2 c^2 b^2 d-4)(1+d)=0
				\end{cases}\Leftrightarrow \begin{cases}
				\text{if } d=-1 \text{ then } \begin{cases}
				d=-1\\
				a^2 c^2 b=4
				\end{cases}\\
				\text{if } d=1 \text{ then } \begin{cases}
				d=1\\
				a^2 c^2=4\\
				b^2=1
				\end{cases}\\
				\text{else } \begin{cases}
				b=-1\\
				da^2 c^2=4
				\end{cases}
				\end{cases}$
			\end{center}
			
			The $Z^*$-algebras are then $ Z^{(a,1)} X_{(\frac{2}{a},1)}$, $ Z^{(a,1)} X_{(-\frac{2}{a},1)}$, $ Z^{(a,-1)} X_{(\frac{2}{a},1)}$, $ Z^{(a,-1)} X_{(-\frac{2}{a},1)}$, $Z^{(a,\frac{4}{a^2 b^2})} X_{(b,-1)}$ and $Z^{(a,-1)} X_{(b,\frac{4}{a^2 b^2})}$. The dualizer is the identity for $ Z^{(a,1)} X_{(\frac{2}{a},1)}$.
			
			\item $Z^\alpha H_\beta$: The dualizer of $ZH$ is $\begin{pmatrix}
			2&-1\\-1&1
			\end{pmatrix}$ and its inverse is $\begin{pmatrix}
			1&1\\1&2
			\end{pmatrix}$. let $\alpha=(a,b)$ and $\beta=(c,d)$, $a,b,c,d\in\mathbb{C}^*$. $Z^\alpha$ and $H_\beta$ are compatible iff 
			
			\begin{center}
				$c\begin{pmatrix}
				1&1-d\\0&d
				\end{pmatrix}\begin{pmatrix}
				2&-1\\-1&1
				\end{pmatrix}a\begin{pmatrix}
				1&0\\0&b
				\end{pmatrix}=\frac{1}{a}\begin{pmatrix}
				1&0\\0&\frac{1}{b}
				\end{pmatrix}\begin{pmatrix}
				1&1\\1&2
				\end{pmatrix}\frac{1}{c}\begin{pmatrix}
				1&1-\frac{1}{d}\\0&\frac{1}{d}
				\end{pmatrix}$
			\end{center}
			
			This gives the system:
			
			\begin{center}
				$\begin{cases}
				a^2 c^2(d + 1)=1\\
				a^2 c^2bd=-1\\
				a^2 c^2 b^2 d^2 =1+d
				\end{cases}\Leftrightarrow \begin{cases}
				a^2 c^2\neq 1\\
				b=\frac{1}{a^2 c^2-1}\\
				d=\frac{1-a^2 c^2}{a^2 c^2}
				\end{cases}$
			\end{center}
			
			The $Z^*$-algebras are $ Z^{(a,\frac{1}{a^2 b^2-1})} H_{(b,\frac{1-a^2 b^2}{a^2 b^2})}$ with $a^2 b^2\neq 1$. The dualizer is the Hadamard gate in the case $a=1$ and $b=\sqrt{2}$.
			
			\item $Z^\alpha W_\beta$: the dualizer of $ZW$ is $\begin{pmatrix}
			0&1\\1&0
			\end{pmatrix}$. Let $\alpha=(a,b)$ and $\beta=(c,d)$, $a,b,c\in\mathbb{C}^*$, $d\in\mathbb{C}$. $Z^\alpha$ and $W_\beta$ are compatible iff 
			
			\begin{center}
				$c\begin{pmatrix}
				1&0\\d&1
				\end{pmatrix}\begin{pmatrix}
				0&1\\1&0
				\end{pmatrix}a\begin{pmatrix}
				1&0\\0&b
				\end{pmatrix}=\frac{1}{a}\begin{pmatrix}
				1&0\\0&\frac{1}{b}
				\end{pmatrix}\begin{pmatrix}
				0&1\\1&0
				\end{pmatrix}\frac{1}{c}\begin{pmatrix}
				1&0\\-d&1
				\end{pmatrix}$
			\end{center}
			
			This gives the system: $\begin{cases}
			d=0\\
			a^2 c^2 b=1
			\end{cases}$.
			
			The $Z^*$-algebras are $ Z^{(a,\frac{1}{a^2b^2})} W_{(b,0)}$. The dualizer is the NOT gate in the case $a=1$ and $b=1$.\\
			
			\item $W^\alpha Z_\beta$: The dualizer of $WZ$ is $\begin{pmatrix}
			0&1\\1&0
			\end{pmatrix}$. Let $\alpha=(a,b)$ and $\beta=(c,d)$, $a,c,d\in\mathbb{C}^*$, $b\in\mathbb{C}$. $W^\alpha$ and $Z_\beta$ are compatible iff 
			
			\begin{center}
				$c\begin{pmatrix}
				1&0\\0&d
				\end{pmatrix}\begin{pmatrix}
				0&1\\1&0
				\end{pmatrix}a\begin{pmatrix}
				1&0\\b&1
				\end{pmatrix}=\frac{1}{a}\begin{pmatrix}
				1&0\\-b&1
				\end{pmatrix}\begin{pmatrix}
				0&1\\1&0
				\end{pmatrix}\frac{1}{c}\begin{pmatrix}
				1&0\\0&\frac{1}{d}
				\end{pmatrix}$
			\end{center}
			
			This gives the system: $\begin{cases}
			b=0\\
			c^2a^2 d=1
			\end{cases}$.
			
			The $Z^*$-algebras are $ W^{(a,0)} Z_{(b,\frac{1}{b^2a^2})}$. The dualizer is the NOT gate in the case $a=1$ and $b=1$.\\
			
		\end{itemize}
		
	\end{proof}

	\subsection{Proof of Proposition \ref{monlin}} \phantomsection\label{monlinproof}
	
	\begin{proposition}
		There are only four $Z^*$-algebras in $\textbf{LinRel}_{\mathbb{K}}$: $BB$, $NN$, $BN$ and $NB$.
	\end{proposition}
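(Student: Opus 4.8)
The statement is a finiteness-plus-verification claim, so my plan is first to cut the problem down to finitely many candidates and then to check each by hand. As recalled just before the statement, there are exactly two monoids in $\textbf{LinRel}_{\mathbb{K}}$, namely $B=\{(x,x,x)\}$ (equality) and $N=\{(x,y,x+y)\}$ (addition), and both have trivial phase group. Since each is known to carry a Frobenius structure, Lemma~\ref{famfrob} tells me that every comonoid completing a fixed monoid to a Frobenius algebra is a phase-shift of one fixed comonoid; as the phase groups are trivial, $B$ and $N$ each determine a \emph{unique} Frobenius algebra. Hence there are only two Frobenius algebras, and a $Z^*$-algebra is a choice of a first and a second one, leaving at most the four candidates $BB$, $NN$, $BN$, $NB$. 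Everything else is verifying that each of these four satisfies the two defining conditions: the bigebra rule \ref{bigebra} (comonoid of the first against monoid of the second) and compatibility.

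\textbf{The bigebra rule.} I would record the two Frobenius algebras explicitly as subspaces: for $B$ the comultiplication is the copy $\Delta_B=\{(x,x,x)\}$, with cup and cap both the diagonal $\{(x,x)\}$; for $N$ the comultiplication is the split $\Delta_N=\{(z,x,y):x+y=z\}$, with cup and cap both the anti-diagonal $\{(x,-x)\}$. Checking the bigebra law in each of the four cases is then a short relational composition in $\mathbb{K}^{4}$: compute $\Delta_{F}\circ\mu_{G}$ on one side and $(\mu_{G}\otimes\mu_{G})\circ(\mathrm{id}\otimes\sigma\otimes\mathrm{id})\circ(\Delta_{F}\otimes\Delta_{F})$ on the other, and verify that both describe the same subspace of $\mathbb{K}^{2+2}$ (for instance $\{(a,b,a+b,a+b)\}$ for $BN$). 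I expect all four pairs to pass, so the bigebra rule causes no obstruction.

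\textbf{Compatibility.} The only remaining point is compatibility, which (by the remark after its definition) holds exactly when the dualizer --- the $1\to1$ morphism built from the two compact structures --- is an involution. When the two algebras coincide, in $BB$ and $NN$, the dualizer is the identity by the snake equation, so these are trivially compatible (indeed coflexible). For the mixed pairs $BN$ and $NB$, composing the diagonal compact structure of $B$ with the anti-diagonal compact structure of $N$ yields the antipode $x\mapsto -x$ as dualizer, which is its own inverse and hence an involution. So all four candidates are genuine $Z^*$-algebras, which together with the reduction step proves the proposition.

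\textbf{Main obstacle.} The conceptual content is all in the reduction: I must be confident that each of $B$ and $N$ extends to a (unique) Frobenius algebra so that Lemma~\ref{famfrob} really yields exactly two Frobenius algebras rather than fewer. After that the work is bookkeeping --- translating the diagrammatic bigebra and compatibility equations into relational composition, taking care to quantify existentially over the glued wires correctly. Each such computation lives in $\mathbb{K}^{3}$ or $\mathbb{K}^{4}$ and is short, so no real difficulty should arise once the compact structures are written down.
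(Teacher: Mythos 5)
Your proof divides into a reduction step and a verification step, and the genuine gap is in the reduction. You cite, as something ``recalled just before the statement,'' that $B$ and $N$ are the only monoids in $\textbf{LinRel}_{\mathbb{K}}$ and that their phase groups are trivial. But in the paper these two assertions are not previously established results available for citation: the sentence in the main text is a forward announcement, and the appendix proof of this very proposition is the \emph{only} place in the paper where they are proved. Indeed, the paper's proof opens by observing that, given the equations already established in \cite{bonchi2017interacting}, proving the proposition \emph{amounts to} showing exactly these two facts, and then spends all of its effort there: a monoid is a unital subspace $M\subseteq\mathbb{K}^{3}$; the trivial subspaces $\{0\}$ and $\mathbb{K}^{3}$ are not unital; a one-dimensional unital subspace is forced to be the diagonal $\{(x,x,x)\}$; a two-dimensional one, written as $\{(x,y,z): ax+by+cz=0\}$, is forced to satisfy $a=b=-c\neq 0$, i.e.\ to be $\{(x,y,x+y)\}$; and a phase of either monoid is forced to be the identity relation. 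Without this classification your ``at most four candidates'' claim has no support: nothing in your argument excludes a third monoid, or a nontrivial phase that would produce further Frobenius structures on $B$ or $N$ through Lemma~\ref{famfrob}. Note also that the obstacle you flag (existence and uniqueness of the Frobenius structures on $B$ and $N$) is not the hard point; the hard point is completeness of the list of monoids, which you never address.

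On the other hand, your verification half is correct and is precisely the part the paper does \emph{not} spell out, since it delegates it to the equations of \cite{bonchi2017interacting}. Your identification of the compact structures (diagonal $\{(x,x)\}$ for $B$, antidiagonal $\{(x,-x)\}$ for $N$), your relational computation of the bigebra law (both sides of the mixed case being $\{(x,y,x+y,x+y)\}$), and your computation of the dualizer of the mixed pairs as the antipode $x\mapsto -x$, which is an involution, are all accurate. So your proposal and the paper divide the labor in opposite ways: the paper proves the monoid classification and phase triviality from scratch and cites prior work for the checks, while you perform the checks and cite the classification. To make your proof self-contained you must add the dimension-by-dimension analysis of unital subspaces of $\mathbb{K}^{3}$ and the triviality of the phase groups; with that added, the rest of your argument (Lemma~\ref{famfrob} plus trivial phases giving exactly two Frobenius algebras, then the explicit bigebra and compatibility checks) soundly completes the proof.
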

	
	\begin{proof}
		
		Using the equations of \cite{bonchi2017interacting} this amount to show that $\mu_N$ and $\mu_B$ are the only monoids and that there phase groupos are trivial.
		
		A subspace $M$ of $\mathbb{K}^3$ is unital iff $\exists u\in\mathbb{K}, \forall x, y \in \mathbb{K}, \left((u,x,y)\in M \Leftrightarrow x=y\right) \land \left((x,u,y)\in M \Leftrightarrow x=y\right)$.
		
		The trivial subspaces $\{0,0,0\}$ and $\mathbb{K}^3$ don't satisfy this property.

		If $M$ is of dimension one then there is a vector $(a,b,c)$ such that $\forall x,y,z\in\mathbb{K}, (x,y,z)\in M \Leftrightarrow \exists \lambda\in \mathbb{K}, (x,y,z)=(\lambda a,\lambda b, \lambda c)$.
		
		If $M$ has a unit $u$, given an $x\in\mathbb{K}$ we have $(x,u,x)\in M$ and then $\exists \lambda \mathbb{K}, x=\lambda a, u=\lambda b, x=\lambda c)$. We know that $\lambda\neq 0$ else all triples $(x,u,y)$ would be in $M$. This gives $a=c$, by symmetry we have also $b=c$. The only unital subspace of dimension one is $\mu_N$. It is also associative and thus is a monoid.
		
		If $M$ is of dimension $2$ then there is a vector $(a,b,c)$ such that $\forall x,y,z\in\mathbb{K}, (x,y,z)\in M \Leftrightarrow \exists \lambda\in \mathbb{K}, ax+by+cz=0$.
		
		If $M$ has a unit $u$, given any $x\in\mathbb{K}$ we have $(x,u,x)\in M$ and then $ax+bu+cx=0$. This gives $bu=0$ and $c=-a$. By symmetry we also have $au=0$ and $c=-b$. If $a=b=c=0$ then all triple would be in $M$. We deduce that $a=b=-c\neq 0$ and $u=0$. The only unital subspace of dimension $2$ is $\mu_B$. It is also associative and thus is a monoid.
		
		Finally $\mu_B$ and $\mu_N$ are the only monoids in $\textbf{LinRel}_{\mathbb{K}}$.
		 
		Now let $\alpha$ be a phase of $\mu_N$, if $(x,y)\in \alpha$ then the phase's definition gives us that $x=y$, so $\alpha=id$ or $\alpha= {(0,0)}$ the only invertible possibility is $id$. Now let $\beta$ be a phase of $\mu_B$, if $(x,y)\in \alpha$ then the phase's definition gives us that for all $z\in\mathbb{K}$ $(x+z,y+z)\in\alpha$, thus $\alpha=id$ or $\alpha=\mathbb{K}^2$ the only invertible possibility is $id$. Finally both phase groups are trivial.
	\end{proof}
	

\section{All graphical calculi for quantum computing}

\subsection[The ZZ-calculi]{The $ZZ$-calculis}

\subsubsection[First ZZ calculus]{$Z^{(a,b/a)}Z_{(1/a,a/b)}$}

This is the first calculus presented in the Theorem, up to a
re-parametrization that makes it slightly better looking:

\input figs/mprodwhite $=\begin{pmatrix}
a & 0 & 0 & 0 \\
0 & 0 & 0 & b
\end{pmatrix}$ \hfill
\begin{tikzpicture}[baseline=(current bounding box.center)]
\begin{pgfonlayer}{nodelayer}
\coordinate (0) at (0.25,0);
\node [style=gmonoid] (1) at (0.25,0.5) {};

\end{pgfonlayer}
\begin{pgfonlayer}{edgelayer}
\draw[] (0.center) to[out= 90, in=-89.9] (1.center);

\end{pgfonlayer}
\end{tikzpicture}
 $=\begin{pmatrix}
\frac{1}{a} \\
\frac{1}{b}
\end{pmatrix}$ \hfill
\input figs/coprodwhite $=\begin{pmatrix}
1 & 0 \\
0 & 0 \\
0 & 0 \\
0 & 1
\end{pmatrix}$ \hfill
\begin{tikzpicture}[baseline=(current bounding box.center)]
\begin{pgfonlayer}{nodelayer}
\node [style=gmonoid] (0) at (0.25,0.25) {};
\coordinate (1) at (0.25,0.75);

\end{pgfonlayer}
\begin{pgfonlayer}{edgelayer}
\draw[] (0.center) to[out= 90, in=-89.9] (1.center);

\end{pgfonlayer}
\end{tikzpicture}
 $=\begin{pmatrix}
1 & 1
\end{pmatrix}$\hfill
\input figs/phasewhite  $=x\begin{pmatrix}
1 & 0 \\
0 & y
\end{pmatrix}$

\hfill $=\begin{pmatrix}
a & 0 & 0 & b
\end{pmatrix}$\hfill
 $=\begin{pmatrix}
\frac{1}{a} \\
0 \\
0 \\
\frac{1}{b}
\end{pmatrix}$\hfill
\begin{tikzpicture}[baseline=(current bounding box.center)]
\begin{pgfonlayer}{nodelayer}
\coordinate (0) at (0.25,0);
\node [style=hmonoid] (1) at (0.25,0.5) {};
\coordinate (2) at (0.25,1);

\end{pgfonlayer}
\begin{pgfonlayer}{edgelayer}
\draw[] (0.center) to[out= 90, in=-89.9] (1.center);
\draw[] (1.center) to[out= 90, in=-89.9] (2.center);

\end{pgfonlayer}
\end{tikzpicture}
 $=\begin{pmatrix}
1 & 0 \\
0 & 1
\end{pmatrix}$\hfill

\input figs/mprodblack $=\begin{pmatrix}
1 & 0 & 0 & 0 \\
0 & 0 & 0 & 1
\end{pmatrix}$ \hfill
\begin{tikzpicture}[baseline=(current bounding box.center)]
\begin{pgfonlayer}{nodelayer}
\coordinate (0) at (0.25,0);
\node [style=hmonoid] (1) at (0.25,0.5) {};

\end{pgfonlayer}
\begin{pgfonlayer}{edgelayer}
\draw[] (0.center) to[out= 90, in=-89.9] (1.center);

\end{pgfonlayer}
\end{tikzpicture}
 $=\begin{pmatrix}
1 \\
1
\end{pmatrix}$ \hfill
\input figs/coprodblack $=\begin{pmatrix}
\frac{1}{a} & 0 \\
0 & 0 \\
0 & 0 \\
0 & \frac{1}{b}
\end{pmatrix}$ \hfill
\begin{tikzpicture}[baseline=(current bounding box.center)]
\begin{pgfonlayer}{nodelayer}
\node [style=hmonoid] (0) at (0.25,0.25) {};
\coordinate (1) at (0.25,0.75);

\end{pgfonlayer}
\begin{pgfonlayer}{edgelayer}
\draw[] (0.center) to[out= 90, in=-89.9] (1.center);

\end{pgfonlayer}
\end{tikzpicture}
 $=\begin{pmatrix}
a & b
\end{pmatrix}$\hfill
\input figs/phaseblack $=x\begin{pmatrix}
1 & 0 \\
0 & y
\end{pmatrix}$

\subsubsection[Second ZZ calculus]{$Z^{(a,b/a)}Z_{(1/a,-a/b)}$}

The only difference with the previous calculus is in the following generators:

 $=\begin{pmatrix}
1 & 0 \\
0 & -1
\end{pmatrix}$\hfill
\input figs/mprodblack $=\begin{pmatrix}
1 & 0 & 0 & 0 \\
0 & 0 & 0 & -1
\end{pmatrix}$ \hfill
 $=\begin{pmatrix}
1 \\
-1
\end{pmatrix}$

\input figs/coprodblack $=\begin{pmatrix}
\frac{1}{a} & 0 \\
0 & 0 \\
0 & 0 \\
0 & -\frac{1}{b}
\end{pmatrix}$ \hfill
 $=\begin{pmatrix}
a & -b
\end{pmatrix}$\hfill
\input figs/phaseblack $=x\begin{pmatrix}
1 & 0 \\
0 & -y
\end{pmatrix}$

\subsubsection[The other ZZ-calculi]{$Z^{(a,b/a)}Z_{(-1/a,a/b)}$ and $Z^{(a,b/a)}Z_{(-1/a,-a/b)}$}

These calculi differ from the previous ones only by the presence of a global
scalar ``-1'' in all matrices corresponding to the black nodes.

\subsection{The ZX-calculi}

\subsubsection[The first ZX calculus]{$Z^{(a,1)}X_{(2/a, 1)}$}

In the case $a = 1$, this is almost the $ZX$-calculus of \cite{coecke2011interacting}:

\input figs/mprodwhite $=a\begin{pmatrix}
1 & 0 & 0 & 0 \\
0 & 0 & 0 & 1
\end{pmatrix}$ \hfill
 $=\frac{1}{a}\begin{pmatrix}
1 \\
1
\end{pmatrix}$ \hfill
\input figs/coprodwhite $=\begin{pmatrix}
1 & 0 \\
0 & 0 \\
0 & 0 \\
0 & 1
\end{pmatrix}$ \hfill
 $=\begin{pmatrix}
1 & 1
\end{pmatrix}$\hfill
\input figs/phasewhite  $=x\begin{pmatrix}
1 & 0 \\
0 & y
\end{pmatrix}$

\hfill $= a\begin{pmatrix}
1 & 0 & 0 & 1
\end{pmatrix}$\hfill
 $=\frac{1}{a}\begin{pmatrix}
1 \\
0 \\
0 \\
1
\end{pmatrix}$\hfill
 $=\begin{pmatrix}
1 & 0 \\
0 & 1
\end{pmatrix}$\hfill

\input figs/mprodblack $=\begin{pmatrix}
1 & 0 & 0 & 1 \\
0 & 1 & 1 & 0
\end{pmatrix}$ \hfill
 $=\begin{pmatrix}
1 \\
0
\end{pmatrix}$ \hfill
\input figs/coprodblack $=\frac{1}{a}\begin{pmatrix}
1 & 0 \\
0 & 1 \\
0 & 1 \\
1 & 0
\end{pmatrix}$ \hfill
 $=a\begin{pmatrix}
1 & 0
\end{pmatrix}$

\input figs/phaseblack $=\frac{1}{2} \, x\begin{pmatrix}
y + 1 & -y + 1 \\
-y + 1 & y + 1
\end{pmatrix}$

\subsubsection[The second ZX calculus]{$Z^{(a,1)}X_{(2/a, -1)}$}

The only difference with the previous calculus is in the following generators:

 $=\begin{pmatrix}
0 & 1 \\
1 & 0
\end{pmatrix}$
\hfill
\input figs/mprodblack $=\begin{pmatrix}
0 & 1 & 1 & 0 \\
1 & 0 & 0 & 1
\end{pmatrix}$ \hfill
 $=\begin{pmatrix}
0 \\
1
\end{pmatrix}$

\input figs/coprodblack $=\frac{1}{a}\begin{pmatrix}
0 & 1 \\
1 & 0 \\
1 & 0 \\
0 & 1
\end{pmatrix}$ 
\hfill
 $=a\begin{pmatrix}
0 & 1
\end{pmatrix}$\hfill
\input figs/phaseblack $=\frac{1}{2} \, x\begin{pmatrix}
-y + 1 & y + 1 \\
y + 1 & -y + 1
\end{pmatrix}$

\subsubsection[The third and fourth ZX calculi]{$Z^{(a,1)}X_{(-2/a, -1)}$ and $Z^{(a,1)}X_{(-2/a, 1)}$}

These calculi differ from the previous ones only by the presence of a global
scalar ``-1'' in all matrices corresponding to the black nodes.

\clearpage
\subsubsection[The fifth ZX calculus]{$Z^{(a/b,b^2)}X_{(2/a,-1)}$}

This is a quite different calculus:

\input figs/mprodwhite $=a\begin{pmatrix}
\frac{1}{b} & 0 & 0 & 0 \\
0 & 0 & 0 & b
\end{pmatrix}$ \hfill
 $=\frac{1}{a}\begin{pmatrix}
b \\
\frac{1}{b}
\end{pmatrix}$ \hfill
\input figs/coprodwhite $=\begin{pmatrix}
1 & 0 \\
0 & 0 \\
0 & 0 \\
0 & 1
\end{pmatrix}$ \hfill
 $=\begin{pmatrix}
1 & 1
\end{pmatrix}$\hfill
\input figs/phasewhite  $=x\begin{pmatrix}
1 & 0 \\
0 & y
\end{pmatrix}$

\hfill $= a\begin{pmatrix}
\frac{1}{b} & 0 & 0 & b
\end{pmatrix}$\hfill
 $=\frac{1}{a}\begin{pmatrix}
b \\
0 \\
0 \\
\frac{1}{b}
\end{pmatrix}$\hfill
 $=\begin{pmatrix}
0 & b \\
\frac{1}{b} & 0
\end{pmatrix}$\hfill

\input figs/mprodblack $=\begin{pmatrix}
0 & b & b & 0 \\
\frac{1}{b} & 0 & 0 & \frac{1}{b}
\end{pmatrix}$ \hfill
 $=\begin{pmatrix}
0 \\
\frac{1}{b}
\end{pmatrix}$ \hfill
\input figs/coprodblack $=\frac{1}{a}\begin{pmatrix}
0 & b^{2} \\
1 & 0 \\
1 & 0 \\
0 & \frac{1}{b^{2}}
\end{pmatrix}$ \hfill
 $=a\begin{pmatrix}
0 & 1
\end{pmatrix}$

\input figs/phaseblack $=\frac{1}{2} \, x\begin{pmatrix}
-b y + b & b y + b \\
\frac{y + 1}{b} & -\frac{y - 1}{b}
\end{pmatrix}$
\subsubsection[The last ZX calculus]{$Z^{(a,-1)}X_{(2b/a,1/b^2)}$}

This is a calculus dual to the previous one, but the equations look
more intricate:

\input figs/mprodwhite $=a\begin{pmatrix}
1 & 0 & 0 & 0 \\
0 & 0 & 0 & -1
\end{pmatrix}$ \hfill
 $=\frac{1}{a}\begin{pmatrix}
1 \\
-1
\end{pmatrix}$ \hfill
\input figs/coprodwhite $=\begin{pmatrix}
1 & 0 \\
0 & 0 \\
0 & 0 \\
0 & 1
\end{pmatrix}$ \hfill
 $=\begin{pmatrix}
1 & 1
\end{pmatrix}$\hfill
\input figs/phasewhite  $=x\begin{pmatrix}
1 & 0 \\
0 & y
\end{pmatrix}$

\hfill $= a\begin{pmatrix}
1 & 0 & 0 & -1
\end{pmatrix}$\hfill
 $=\frac{1}{a}\begin{pmatrix}
1 \\
0 \\
0 \\
-1
\end{pmatrix}$\hfill
 $=\frac{1}{2b}\begin{pmatrix}
b^{2} + 1 & 1 - b^2 \\
b^{2} - 1 & -1-b^2\\
\end{pmatrix}$\hfill

\input figs/mprodblack $=\frac{1}{2b}\begin{pmatrix}
b^{2} + 1 & 1 - b^2 & 1 - b^2 & b^{2} + 1 \\
b^{2} - 1 & -1 - b^2 & - 1 - b^2 & b^{2} - 1
\end{pmatrix}$ \hfill
 $=\frac{1}{2b}\begin{pmatrix}
b^{2} + 1 \\
b^{2} - 1
\end{pmatrix}$

\input figs/coprodblack $=\frac{1}{2ab}\begin{pmatrix}
b^{2} + 1 & 1 - b^2 \\
b^{2} - 1 & - 1 - b^2 \\
b^{2} - 1 & - 1 - b^2 \\
b^{2} + 1 & 1 - b^2\\
\end{pmatrix}$
\hfill
 $=\frac{a}{2b}\begin{pmatrix}
b^{2} + 1 & 1 - b^2
\end{pmatrix}$

\input figs/phaseblack $=\frac{x}{2b}\begin{pmatrix}
b^{2} y + 1 & 1 - b^{2} y \\
b^{2} y - 1 & -1 - b^2 y\\
\end{pmatrix}$
\clearpage
\subsection{The ZH-calculi}

\subsubsection[The First ZH-calculus]{The $Z^{(a,1/(b^2-1))}H_{(b/a,(1-b^2)/b^2)}$ calculus}

\input figs/mprodwhite $=a\begin{pmatrix}
1 & 0 & 0 & 0 \\
0 & 0 & 0 & \frac{1}{b^{2} - 1}
\end{pmatrix}$ \hfill
 $=\frac{1}{a}\begin{pmatrix}
1 \\
b^{2} - 1
\end{pmatrix}$ \hfill
\input figs/coprodwhite $=\begin{pmatrix}
1 & 0 \\
0 & 0 \\
0 & 0 \\
0 & 1
\end{pmatrix}$ \hfill
 $=\begin{pmatrix}
1 & 1
\end{pmatrix}$\hfill
\input figs/phasewhite  $=x\begin{pmatrix}
1 & 0 \\
0 & y
\end{pmatrix}$

\hfill $= a\begin{pmatrix}
1 & 0 & 0 & \frac{1}{b^{2} - 1}
\end{pmatrix}$\hfill
 $=\frac{1}{a}\begin{pmatrix}
1 \\
0 \\
0 \\
b^{2} - 1
\end{pmatrix}$\hfill
 $=\frac{1}{b}\begin{pmatrix}
1 & 1 \\
b^{2} - 1 & -1
\end{pmatrix}$\hfill

\input figs/mprodblack $=\frac{1}{b}\begin{pmatrix}
1 & 1 & 1 & 1 \\
b^{2} - 1 & b^{2} - 1 & b^{2} - 1& -1
\end{pmatrix}$ \hfill
 $=\frac{1}{b}\begin{pmatrix}
1\\
-1
\end{pmatrix}$ \hfill
\input figs/coprodblack $=\frac{1}{ab}\begin{pmatrix}
1 & 1 \\
b^{2} - 1 & b^{2} - 1\\
b^{2} - 1 & b^{2} - 1\\
(b^{2} - 1)^2 & 1 - b^2\\
\end{pmatrix}$

 $=\frac{a}{b}\begin{pmatrix}
1 & \frac{1}{1 - b^{2}}
\end{pmatrix}$\hfill
\input figs/phaseblack $=\frac{x}{b}\begin{pmatrix}
1 & 1 \\
b^{2} - 1 & b^2 - 1 - b^2y\\
\end{pmatrix}$

\subsubsection[The First ZH-calculus revisited]{The $Z^{(a,1/c}H_{(\sqrt{c+1}/a,-c/(c+1))}$ calculus}

This is an alternative presentation of the previous calculus taking $c = b^2 - 1$. While not a new calculus \textit{per se}, we think it is easier to
understand than the previous one. $\sqrt{c+1}$ represents one of the two square roots of $c+1$.

\input figs/mprodwhite $=a\begin{pmatrix}
1 & 0 & 0 & 0 \\
0 & 0 & 0 & \frac{1}{c}
\end{pmatrix}$ \hfill
 $=\frac{1}{a}\begin{pmatrix}
1 \\
c
\end{pmatrix}$ \hfill
\input figs/coprodwhite $=\begin{pmatrix}
1 & 0 \\
0 & 0 \\
0 & 0 \\
0 & 1
\end{pmatrix}$ \hfill
 $=\begin{pmatrix}
1 & 1
\end{pmatrix}$\hfill
\input figs/phasewhite  $=x\begin{pmatrix}
1 & 0 \\
0 & y
\end{pmatrix}$

\hfill $= a\begin{pmatrix}
1 & 0 & 0 & \frac{1}{c}
\end{pmatrix}$\hfill
 $=\frac{1}{a}\begin{pmatrix}
1 \\
0 \\
0 \\
c
\end{pmatrix}$\hfill
 $=\frac{1}{\sqrt{c+1}}\begin{pmatrix}
1 & 1 \\
c & - 1\\
\end{pmatrix}$\hfill

\input figs/mprodblack $=\frac{1}{\sqrt{c+1}}\begin{pmatrix}
1 & 1 & 1 & 1\\
c & c & c & -1\\
\end{pmatrix}$ \hfill
 $=\frac{1}{\sqrt{c+1}}\begin{pmatrix}
1 \\
-1
\end{pmatrix}$ \hfill
\input figs/coprodblack $=\frac{1}{a\sqrt{c+1}}\begin{pmatrix}
1 & 1 \\
c & c \\
c & c \\
c^{2} & -c\\
\end{pmatrix}$ \hfill
 $=\frac{a}{\sqrt{c + 1}}\begin{pmatrix}
1 & \frac{-1}{c}
\end{pmatrix}$\hfill
\input figs/phaseblack $=\frac{x}{\sqrt{c + 1}}\begin{pmatrix}
1 & 1\\
c & c - (c+1) y
\end{pmatrix}$
\clearpage

\subsubsection[The original ZH-calculus]{The case $a= c = 1$}

Here is the calculus we obtain when $a=c=1$. This is almost the ZH-calculus
\cite{backens2018zh} (black nodes are represented by a white rectangle in ZH), with a slight differences in the parametrization of the
phases:

\input figs/mprodwhite $=\begin{pmatrix}
1 & 0 & 0 & 0 \\
0 & 0 & 0 & 1
\end{pmatrix}$ \hfill
 $=\begin{pmatrix}
1 \\
1
\end{pmatrix}$ \hfill
\input figs/coprodwhite $=\begin{pmatrix}
1 & 0 \\
0 & 0 \\
0 & 0 \\
0 & 1
\end{pmatrix}$ \hfill
 $=\begin{pmatrix}
1 & 1
\end{pmatrix}$\hfill
\input figs/phasewhite  $=x\begin{pmatrix}
1 & 0 \\
0 & y
\end{pmatrix}$

\hfill $= \begin{pmatrix}
1 & 0 & 0 & 1
\end{pmatrix}$\hfill
 $=\begin{pmatrix}
1 \\
0 \\
0 \\
1
\end{pmatrix}$\hfill
 $=\frac{1}{\sqrt{2}}\begin{pmatrix}
1 & 1 \\
1 & - 1\\
\end{pmatrix}$\hfill

\input figs/mprodblack $=\frac{1}{\sqrt{2}}\begin{pmatrix}
1 & 1 & 1 & 1\\
1 & 1 & 1 & -1\\
\end{pmatrix}$ \hfill
 $=\frac{1}{\sqrt{2}}\begin{pmatrix}
1 \\
-1
\end{pmatrix}$ \hfill
\input figs/coprodblack $=\frac{1}{\sqrt{2}}\begin{pmatrix}
1 & 1 \\
1 & 1 \\
1 & 1 \\
1 & -1\\
\end{pmatrix}$ \hfill
 $=\frac{1}{\sqrt{2}}\begin{pmatrix}
1 & -1
\end{pmatrix}$

\input figs/phaseblack $=\frac{x}{\sqrt{2}}\begin{pmatrix}
1 & 1\\
1 & 1 - 2y\\
\end{pmatrix}$

\subsection{The ZW-calculi}

\subsubsection[The first ZW calculus]{The $Z^{(a,1/c^2)}W_{(c/a,0)}$}

\input figs/mprodwhite $=a\begin{pmatrix}
1 & 0 & 0 & 0 \\
0 & 0 & 0 & \frac{1}{c^{2}}
\end{pmatrix}$ \hfill
 $=\frac{1}{a}\begin{pmatrix}
1 \\
c^{2}
\end{pmatrix}$ \hfill
\input figs/coprodwhite $=\begin{pmatrix}
1 & 0 \\
0 & 0 \\
0 & 0 \\
0 & 1
\end{pmatrix}$ \hfill
 $=\begin{pmatrix}
1 & 1
\end{pmatrix}$\hfill
\input figs/phasewhite  $=x\begin{pmatrix}
1 & 0 \\
0 & y
\end{pmatrix}$

\hfill $= a\begin{pmatrix}
1 & 0 & 0 & \frac{1}{c^{2}}
\end{pmatrix}$\hfill
 $=\frac{1}{a}\begin{pmatrix}
1 \\
0 \\
0 \\
c^{2}
\end{pmatrix}$\hfill
 $=\begin{pmatrix}
0 & \frac{1}{c} \\
c & 0
\end{pmatrix}$\hfill

\input figs/mprodblack $=\begin{pmatrix}
0 & \frac{1}{c} & \frac{1}{c} & 0 \\
c & 0 & 0 & 0
\end{pmatrix}$ \hfill
 $=\begin{pmatrix}
0 \\
c
\end{pmatrix}$ \hfill
\input figs/coprodblack $=\frac{1}{a}\begin{pmatrix}
0 & \frac{1}{c} \\
c & 0 \\
c & 0 \\
0 & 0
\end{pmatrix}$ \hfill
 $=a\begin{pmatrix}
0 & \frac{1}{c}
\end{pmatrix}$\hfill
\input figs/phaseblack $=x\begin{pmatrix}
c y & \frac{1}{c} \\
c & 0
\end{pmatrix}$

\clearpage
\subsubsection[The original ZW calculus]{The case $a = c = 1$}

This is the ZW-calculus of \cite{hadzihasanovic2017algebra,HNW}:

\input figs/mprodwhite $=\begin{pmatrix}
1 & 0 & 0 & 0 \\
0 & 0 & 0 & 1
\end{pmatrix}$ \hfill
 $=\begin{pmatrix}
1 \\
1
\end{pmatrix}$ \hfill
\input figs/coprodwhite $=\begin{pmatrix}
1 & 0 \\
0 & 0 \\
0 & 0 \\
0 & 1
\end{pmatrix}$ \hfill
 $=\begin{pmatrix}
1 & 1
\end{pmatrix}$\hfill
\input figs/phasewhite  $=x\begin{pmatrix}
1 & 0 \\
0 & y
\end{pmatrix}$

\hfill $=\begin{pmatrix}
1 & 0 & 0 & 1
\end{pmatrix}$\hfill
 $=\begin{pmatrix}
1 \\
0 \\
0 \\
1
\end{pmatrix}$\hfill
 $=\begin{pmatrix}
0 & 1 \\
1 & 0
\end{pmatrix}$\hfill

\input figs/mprodblack $=\begin{pmatrix}
0 & 1 & 1 & 0 \\
1 & 0 & 0 & 0
\end{pmatrix}$ \hfill
 $=\begin{pmatrix}
0 \\
1
\end{pmatrix}$ \hfill
\input figs/coprodblack $=\begin{pmatrix}
0 & 1 \\
1 & 0 \\
1 & 0 \\
0 & 0
\end{pmatrix}$ \hfill
 $=\begin{pmatrix}
0 & 1
\end{pmatrix}$\hfill
\input figs/phaseblack $=x\begin{pmatrix}
y & 1 \\
1 & 0
\end{pmatrix}$

\subsubsection[The second ZW calculus]{The $W^{(a,0)}Z_{(b/a,1/b^2)}$ calculus}

This is very similar to the previous calculus, except that $W$ is now chosen as
the white node, meaning that the black node is actually $Z$, up to the dualizer.

\input figs/mprodwhite $=a\begin{pmatrix}
1 & 0 & 0 & 0 \\
0 & 1 & 1 & 0
\end{pmatrix}$ \hfill
 $=\frac{1}{a}\begin{pmatrix}
1 \\
0
\end{pmatrix}$ \hfill
\input figs/coprodwhite $=\begin{pmatrix}
0 & 0 \\
1 & 0 \\
1 & 0 \\
0 & 1
\end{pmatrix}$ \hfill
 $=\begin{pmatrix}
0 & 1
\end{pmatrix}$\hfill
\input figs/phasewhite  $=x\begin{pmatrix}
1 & 0 \\
y & 1
\end{pmatrix}$

\hfill $= a\begin{pmatrix}
0 & 1 & 1 & 0
\end{pmatrix}$\hfill
 $=\frac{1}{a}\begin{pmatrix}
0 \\
1 \\
1 \\
0
\end{pmatrix}$\hfill
 $=\begin{pmatrix}
0 & b \\
\frac{1}{b} & 0
\end{pmatrix}$\hfill

\input figs/mprodblack $=\begin{pmatrix}
0 & 0 & 0 & b \\
\frac{1}{b} & 0 & 0 & 0
\end{pmatrix}$ \hfill
 $=\begin{pmatrix}
b \\
\frac{1}{b}
\end{pmatrix}$ \hfill
\input figs/coprodblack $=\frac{1}{a}\begin{pmatrix}
0 & b \\
0 & 0 \\
0 & 0 \\
\frac{1}{b} & 0
\end{pmatrix}$ \hfill
 $=a\begin{pmatrix}
\frac{1}{b} & b
\end{pmatrix}$\hfill
\input figs/phaseblack $=x\begin{pmatrix}
0 & b y \\
\frac{1}{b} & 0
\end{pmatrix}$

\end{document}